%% file: paper.tex
\documentclass{article}

\usepackage{microtype}
\usepackage{graphicx}
\usepackage{subcaption}
\usepackage{booktabs}
\usepackage{multirow}
\usepackage{hyperref}
\usepackage{xurl}
\usepackage{tikz}
\usetikzlibrary{arrows.meta,bending,positioning,calc,decorations.pathreplacing,fit,backgrounds}

\usepackage[accepted]{icml2026}

\makeatletter
\renewcommand{\Notice@String}{\textit{Published at the Trustworthy AI4GOOD
Workshop at the 43rd International Conference on Machine
Learning (ICML 2026), Seoul, South Korea. Copyright 2026 by the
author(s).}}
\makeatother

\usepackage{amsmath}
\usepackage{amssymb}
\usepackage{mathtools}
\usepackage{amsthm}

\usepackage[capitalize,noabbrev]{cleveref}

\usepackage{enumitem}

\crefname{theorem}{theorem}{theorems}
\Crefname{theorem}{Theorem}{Theorems}
\crefname{proposition}{proposition}{propositions}
\Crefname{proposition}{Proposition}{Propositions}
\crefname{lemma}{lemma}{lemmas}
\Crefname{lemma}{Lemma}{Lemmas}
\crefname{corollary}{corollary}{corollaries}
\Crefname{corollary}{Corollary}{Corollaries}
\crefname{definition}{definition}{definitions}
\Crefname{definition}{Definition}{Definitions}
\crefname{assumption}{assumption}{assumptions}
\Crefname{assumption}{Assumption}{Assumptions}
\crefname{example}{example}{examples}
\Crefname{example}{Example}{Examples}
\crefname{remark}{remark}{remarks}
\Crefname{remark}{Remark}{Remarks}
\crefname{algorithm}{algorithm}{algorithms}
\Crefname{algorithm}{Algorithm}{Algorithms}

\usepackage{aliascnt}
\theoremstyle{plain}
\newtheorem{theorem}{Theorem}[section]

\newaliascnt{proposition}{theorem}
\newtheorem{proposition}[proposition]{Proposition}
\aliascntresetthe{proposition}

\newaliascnt{lemma}{theorem}
\newtheorem{lemma}[lemma]{Lemma}
\aliascntresetthe{lemma}

\newaliascnt{corollary}{theorem}
\newtheorem{corollary}[corollary]{Corollary}
\aliascntresetthe{corollary}

\theoremstyle{definition}
\newaliascnt{definition}{theorem}
\newtheorem{definition}[definition]{Definition}
\aliascntresetthe{definition}

\newaliascnt{assumption}{theorem}

\aliascntresetthe{assumption}

\newaliascnt{example}{theorem}

\aliascntresetthe{example}

\theoremstyle{remark}
\newaliascnt{remark}{theorem}

\aliascntresetthe{remark}

\usepackage{algorithm}
\usepackage{algorithmic}

\icmltitlerunning{ChainMark: Model-Free LLM Watermarking}

\begin{document}

\twocolumn[
  \icmltitle{ChainMark: Model-Free LLM Watermarking with\\
    Closed-Form Calibration}

  \icmlsetsymbol{equal}{*}

  \begin{icmlauthorlist}
    \icmlauthor{Chengheng Li-Chen}{epfl}
    \icmlauthor{Kyuhee Kim}{epfl}
  \end{icmlauthorlist}

  \icmlaffiliation{epfl}{École Polytechnique Fédérale de Lausanne (EPFL), Lausanne, Switzerland}

  \icmlcorrespondingauthor{Chengheng Li-Chen}{chengheng.lichen@epfl.ch}

  \icmlkeywords{AI governance, LLM watermarking, content provenance,
    EU AI Act, technical AI governance}

  \vskip 0.3in
]

\printAffiliationsAndNotice{}

\input{01_abstract}

\input{02_introduction}
\input{03_related_work}
\input{04_preliminaries}
\input{05_method}
\input{06_theory}
\input{07_experiments}
\input{08_discussion}
\input{09_conclusion}

\input{10_impact}

\bibliography{paper}
\bibliographystyle{icml2026}

\appendix
\input{11_proofs}
\input{12_reference_tables}
\input{13_reproducibility}
\input{14_run_data}

\input{15_reference_impl}
\input{16_generation_examples}

\end{document}

%% file: 01_abstract.tex
\begin{abstract}
Regulatory regimes such as the EU AI Act mandate machine-readable marking of synthetic text, but existing watermarks are ill-suited to regulator-facing audit: entropy-gated detectors require the generating model at detection time, and no scheme maps a regulator's targets (a false-positive rate, a text-length floor, and a watermark budget) to a deployer configuration in closed form. \\
We introduce \emph{ChainMark}, an active watermark whose detection needs no model: a keyed hash partitions the vocabulary into $S$ states arranged in a cycle, and at a fraction $\rho$ of positions the model is forced to emit the next state, so the text walks the cycle. A verifier replays the partition from the same key in $O(n)$ hash operations in the text length $n$. We derive a closed-form map $S^\star(n_{\min},\rho,\alpha)$ from the regulator's targets to the minimum state count, which a one-corpus recalibration pins to a 1\% false-positive rate on real text; under uniform substitution the watermark survives edits up to a fraction $\delta^\star = 1 - 1/\sqrt{2} \approx 29.3\%$ independent of $(S,\rho,n)$. Across three instruction-tuned LLMs and four domains, ChainMark retains a 72.8\% true-positive rate after Chinese back-translation, where KGW and SWEET fall below 20\% at matched budget. ChainMark thus maps a regulatory specification to a deployable configuration that a third party can audit without the model.
\end{abstract}

%% file: 02_introduction.tex
\section{Introduction}
\label{sec:intro}
EU AI Act Article~50~\citep{euaiact2024} requires machine-readable marking of AI-generated text by August 2026, but the technical substrate is unsettled. Existing watermarks have three structural problems for regulator-facing audit. \emph{(i)~Model-bound detection.} SWEET~\citep{lee2023sweet} and EWD~\citep{lu2024ewd} select or weight tokens by entropy and so need the generating language model (LM) at detection time; and the inverse-transform distortion-free watermark~\citep{kuditipudi2023robust} needs the full secret key sequence rather than a short key. \emph{(ii)~No closed-form calibration.} None of these schemes converts a regulator's targets (a target false-positive rate $\alpha$, a minimum text length $n_{\min}$, and the fraction $\rho$ of tokens to mark) into the concrete settings a deployer must apply; KGW's~\citep{kirchenbauer2023watermark} watermark strength $\delta_{\mathrm{KGW}}$ is set by hand. \emph{(iii)~Brittle robustness.} At matched budget, the true-positive rate (TPR) of KGW and SWEET falls below 20\% after Chinese back-translation, i.e.\ translating to Chinese and back (\autoref{tab:headline-aggregate}).

\input{figure-1.tex}

We propose \emph{ChainMark}, built on one idea: a secret key hashes every vocabulary token (via SHA-256) into one of $S$ states, the states are arranged in a fixed cycle, and at watermarked positions we constrain the model to emit a token from the next state, so the generated text walks the cycle by construction (\autoref{fig:teaser}). A  verifier holding the key re-derives each token's state and counts how often the walk is followed, in $O(n)$ hash operations, needing the key and tokenizer but no model.

Two properties follow. The detector's false-positive behavior is known in closed form, giving a map $S^\star(n_{\min},\rho,\alpha)$ from a regulator's targets to the minimum number of states; because real text is not uniform over the states, a one-corpus recalibration restores the target false-positive rate. Under uniform random substitution the watermark also has a robustness threshold $\delta^\star = 1 - 1/\sqrt{2} \approx 29.3\%$ independent of $S$, $\rho$, and $n$, keeping a 72.8\% TPR after Chinese back-translation. A deployer therefore configures the watermark directly from a regulatory target, and any auditor can check the output from the key without access to the model.

\paragraph{Contributions.}
\begin{enumerate}[nosep, leftmargin=*]
    \item We introduce ChainMark, a hard-constraint watermark with $O(n)$, model-free detection from the key and tokenizer (\autoref{sec:method}).
    \item We derive an operator-facing calibration map from a target false-positive rate, text-length floor, and budget to the minimum admissible state count (\autoref{thm:detection}, \autoref{app:proofs}), with an empirical recalibration that restores this target FPR on natural-language text (\autoref{thm:detection-recal}), and a robustness threshold under uniform substitution that is invariant to $(S,\rho,n)$ (\autoref{thm:robustness}).
    \item Both the detection law and the robustness threshold extend to any $k$-regular doubly-stochastic transition topology (\autoref{thm:k-regular}).
    \item We show that at matched budget, across three instruction-tuned LLMs and four domains, ChainMark retains substantially more detection signal than KGW and SWEET under translation and random-substitution attacks, at roughly $2\times$ their self-perplexity (\autoref{sec:experiments}).
\end{enumerate}

%% file: figure-1.tex
\begin{figure*}[t]
  \centering
  \begin{tikzpicture}[
    x=1cm, y=1cm,
    paneltitle/.style={font=\small\bfseries, anchor=base, inner sep=0pt,
                       text=black!85},
    paneltitlewest/.style={font=\small\bfseries, anchor=base west,
                           inner sep=0pt, text=black!85},
    axisline/.style={black!60, line width=0.4pt},
    gridline/.style={black!12, line width=0.3pt, dotted},
    pt/.style={circle, draw=black!55, line width=0.4pt,
               minimum size=2.2mm, inner sep=0pt},
    bigpt/.style={circle, draw=black!50, line width=0.45pt,
                  minimum size=4.5mm, inner sep=0pt},
    wmline/.style={line width=1.4pt, green!55!black,
                   line cap=round, line join=round},
    rndline/.style={line width=1.4pt, red!75!black, dashed,
                    line cap=round, line join=round},
    yticklbl/.style={font=\scriptsize, anchor=east, inner sep=1.5pt},
    xticklbl/.style={font=\scriptsize, anchor=north, inner sep=1.5pt},
    axislabel/.style={font=\footnotesize, color=black!75},
    score/.style={font=\normalsize\bfseries, anchor=west, inner sep=0pt},
    flowarr/.style={->, line width=0.9pt, black!75,
                    >={Stealth[length=2.6mm,width=2mm]}},
  ]
    \definecolor{c0}{HTML}{E15A5A}
    \definecolor{c1}{HTML}{F0A030}
    \definecolor{c2}{HTML}{E8C12F}
    \definecolor{c3}{HTML}{4FAE5C}
    \definecolor{c4}{HTML}{4A8FD3}

    \node[paneltitle] at (1.80, 4.95) {\large A.~Clockwork chain};
    \node[font=\footnotesize, color=black!75, anchor=base]
      at (1.80, 4.30)
      {$\sigma_\kappa(t) = \mathrm{SHA\text{-}256}(\kappa\,\Vert\,t)\bmod S$};

    \begin{scope}[shift={(1.80, 2.30)}]
      \foreach \i [evaluate=\i as \stt using {90 - \i*72},
                   evaluate=\i as \fin using {90 - (\i+1)*72}] in {0,1,2,3,4} {
        \fill[c\i!55, draw=black!45, line width=0.45pt]
          (\stt:1.45) arc[start angle=\stt, end angle=\fin, radius=1.45]
          -- (\fin:0.65) arc[start angle=\fin, end angle=\stt, radius=0.65]
          -- cycle;
      }
      \draw[black!25, line width=0.3pt] (0,0) circle (1.45);
      \draw[black!25, line width=0.3pt] (0,0) circle (0.65);
      \foreach \i [evaluate=\i as \mid using {90 - (\i+0.5)*72}] in {0,1,2,3,4} {
        \node[font=\small\bfseries, color=black!85]
          at (\mid:1.05) {$\mathcal V_{\i}$};
      }
      \draw[->, line width=1.6pt, black!85, line cap=round,
            >={Stealth[length=2.2mm,width=1.9mm,inset=0.3mm,bend]}]
        (320:0.45) arc[start angle=320, end angle=0, radius=0.45];
    \end{scope}

    \node[paneltitle] at (6.10, 4.95) {\large B.~Generation step};
    \node[font=\scriptsize, color=black!70, anchor=base]
      at (6.10, 4.30) {LM logits};
    \foreach \i/\xx/\h in {0/5.00/0.55, 1/5.38/0.30, 2/5.76/0.45,
                            3/6.14/0.65, 4/6.52/0.25} {
      \draw[fill=c\i!75, draw=black!45, line width=0.4pt]
        (\xx, 3.55) rectangle (\xx + 0.28, 3.55 + \h);
    }
    \draw[axisline] (4.95, 3.55) -- (6.90, 3.55);

    \draw[flowarr] (6.10, 3.45) -- (6.10, 2.95)
       node[midway, right, font=\scriptsize, color=black!75]
       {mask $\mathcal V_{s+1}$};

    \foreach \i/\xx/\h in {0/5.00/0.55, 1/5.38/0.30, 2/5.76/0.45,
                            3/6.14/0.65, 4/6.52/0.25} {
      \draw[fill=c\i!75, draw=black!45, line width=0.4pt,
            opacity=0.18] (\xx, 2.20) rectangle (\xx + 0.28, 2.20 + \h);
    }
    \draw[fill=c3!75, draw=black!60, line width=0.7pt]
        (6.14, 2.20) rectangle (6.14 + 0.28, 2.20 + 0.65);
    \draw[axisline] (4.95, 2.20) -- (6.90, 2.20);

    \draw[flowarr] (6.10, 2.10) -- (6.10, 1.60)
       node[midway, right, font=\scriptsize, color=black!75]
       {$\arg\max$};

    \node[bigpt, fill=c3!85] (toknext) at (6.10, 1.30) {};
    \node[font=\scriptsize, color=black!75, anchor=base]
      at (6.10, 0.85) {$t_{i+1} \in \mathcal V_{3}$};

    \def\xstep{0.55}
    \def\ystep{0.30}
    \def\plotL{9.90}
    \def\plotR{14.85}
    \def\plotw{4.95}
    \def\ploth{1.20}
    \def\pcen{0.60}
    \node[paneltitle] at (\plotL + \plotw/2, 4.95)
      {\large C.~Detection by clockwork validity};

    \def\wmbot{3.20}
    \def\wmtop{4.40}
    \node[font=\scriptsize\bfseries, color=green!45!black, anchor=base]
      at (\plotL + \plotw/2, \wmtop + 0.18)
      {Watermarked text:};

    \foreach \s in {0,1,2,3,4} {
      \draw[gridline] (\plotL, \wmbot + \s*\ystep) -- (\plotR, \wmbot + \s*\ystep);
    }
    \draw[axisline] (\plotL, \wmbot) -- (\plotR + 0.05, \wmbot);
    \draw[axisline] (\plotL, \wmbot) -- (\plotL, \wmtop + 0.05);
    \foreach \s in {0,1,2,3,4} {
      \node[yticklbl] at (\plotL - 0.05, \wmbot + \s*\ystep) {$\s$};
    }
    \node[axislabel, rotate=90, anchor=south]
      at (\plotL - 0.30, \wmbot + \pcen) {state};

    \draw[wmline]
      (\plotL + 0*\xstep, \wmbot + 0*\ystep)
      -- (\plotL + 1*\xstep, \wmbot + 1*\ystep)
      -- (\plotL + 2*\xstep, \wmbot + 2*\ystep)
      -- (\plotL + 3*\xstep, \wmbot + 3*\ystep)
      -- (\plotL + 4*\xstep, \wmbot + 4*\ystep)
      -- (\plotL + 5*\xstep, \wmbot + 0*\ystep)
      -- (\plotL + 6*\xstep, \wmbot + 1*\ystep)
      -- (\plotL + 7*\xstep, \wmbot + 2*\ystep)
      -- (\plotL + 8*\xstep, \wmbot + 3*\ystep)
      -- (\plotL + 9*\xstep, \wmbot + 4*\ystep);
    \foreach \i/\s in {0/0,1/1,2/2,3/3,4/4,5/0,6/1,7/2,8/3,9/4} {
      \node[pt, fill=c\s!85] at (\plotL + \i*\xstep, \wmbot + \s*\ystep) {};
    }
    \node[score, color=green!42!black]
      at (\plotR + 0.15, \wmbot + \pcen) {$\phi = 1$};

    \def\rndbot{1.00}
    \def\rndtop{2.20}
    \node[font=\scriptsize\bfseries, color=red!60!black, anchor=base]
      at (\plotL + \plotw/2, \rndtop + 0.18)
      {Random text:};

    \foreach \s in {0,1,2,3,4} {
      \draw[gridline] (\plotL, \rndbot + \s*\ystep) -- (\plotR, \rndbot + \s*\ystep);
    }
    \draw[axisline] (\plotL, \rndbot) -- (\plotR + 0.05, \rndbot);
    \draw[axisline] (\plotL, \rndbot) -- (\plotL, \rndtop + 0.05);
    \foreach \s in {0,1,2,3,4} {
      \node[yticklbl] at (\plotL - 0.05, \rndbot + \s*\ystep) {$\s$};
    }
    \node[axislabel, rotate=90, anchor=south]
      at (\plotL - 0.30, \rndbot + \pcen) {state};

    \node[xticklbl] at (\plotL,                 \rndbot - 0.10) {$1$};
    \node[xticklbl] at (\plotL + 4.5*\xstep,    \rndbot - 0.10) {$\cdots$};
    \node[xticklbl] at (\plotL + 9*\xstep,      \rndbot - 0.10) {$n$};
    \node[axislabel, anchor=base]
      at (\plotL + \plotw/2, \rndbot - 0.55) {token index $i$};

    \draw[rndline]
      (\plotL + 0*\xstep, \rndbot + 3*\ystep)
      -- (\plotL + 1*\xstep, \rndbot + 0*\ystep)
      -- (\plotL + 2*\xstep, \rndbot + 2*\ystep)
      -- (\plotL + 3*\xstep, \rndbot + 4*\ystep)
      -- (\plotL + 4*\xstep, \rndbot + 1*\ystep)
      -- (\plotL + 5*\xstep, \rndbot + 3*\ystep)
      -- (\plotL + 6*\xstep, \rndbot + 0*\ystep)
      -- (\plotL + 7*\xstep, \rndbot + 4*\ystep)
      -- (\plotL + 8*\xstep, \rndbot + 2*\ystep)
      -- (\plotL + 9*\xstep, \rndbot + 1*\ystep);
    \foreach \i/\s in {0/3,1/0,2/2,3/4,4/1,5/3,6/0,7/4,8/2,9/1} {
      \node[pt, fill=c\s!85] at (\plotL + \i*\xstep, \rndbot + \s*\ystep) {};
    }
    \node[score, color=red!60!black]
      at (\plotR + 0.15, \rndbot + \pcen) {$\phi \approx 1/S$};

    \draw[flowarr] (3.80, 2.30) -- (4.40, 2.30);
    \draw[flowarr] (8.00, 2.30) -- (8.80, 2.30)
       node[midway, above, font=\scriptsize, color=black!60]
       {append $t_{i+1}$};
  \end{tikzpicture}
  \caption{\textbf{ChainMark at a glance.}
    \textbf{(A)}~A keyed SHA-256 hash $\sigma_\kappa$ partitions the
    vocabulary into $S$ disjoint sets $\mathcal V_0, \dots,
    \mathcal V_{S-1}$ ordered into a clockwork cycle
    $s \to s{+}1 \bmod S$ (here $S{=}5$).
    \textbf{(B)}~At a watermarked position with current state $s$,
    the language-model logits are masked to keep only the next
    partition $\mathcal V_{s+1}$, and the next token is the masked
    $\arg\max$.
    \textbf{(C)}~Detection re-derives every token's state from the
    key $\kappa$ in $O(n)$ hash operations: a watermarked sequence walks
    the chain by construction (top, validity rate $\phi=1$), whereas
    unwatermarked text visits states uniformly and is valid only at
    rate $1/S$ in expectation (bottom).}
  \label{fig:teaser}
\end{figure*}

%% file: 03_related_work.tex
\section{Related Work}
\label{sec:related}

\paragraph{Passive detection.} DetectGPT~\citep{mitchell2023detectgpt}
exploits log-probability curvature; GPTZero~\citep{tian2023gptzero}
combines perplexity and burstiness. Both read statistical traces in
existing text, with no controllable false-positive frontier and sharp
degradation under
paraphrase~\citep{krishna2023paraphrasing,sadasivan2023can}. They
cannot serve as the technical substrate for Article~50 because neither
regulator nor deployer can set the detection regime in advance.

\paragraph{Active watermarking.} Green--red-list (KGW)
watermarking~\citep{kirchenbauer2023watermark} biases next-token
logits toward a hashed green list and detects via a $z$-test on
green-token frequency. The scheme's signal depends on next-token
entropy and degrades under paraphrase, and offers no analytic inversion
from a regulatory parameter back to a bias magnitude. ChainMark's
fingerprint $\phi$ is structurally a Bernoulli($1/S$) $z$-test,
so the calibration formula~\eqref{eq:calibration} is the analogue
KGW lacks.

\paragraph{Distortion-free schemes.}
\citet{aaronson2023semstamp}'s Gumbel-max construction and
\citet{kuditipudi2023robust}'s inverse-transform sampling watermark
are \emph{distortion-free}: they preserve the LM marginal in
expectation over the key. \citet{christ2023undetectable} prove
cryptographic indistinguishability under sufficient entropy. These
schemes beat ChainMark on quality strictly (zero expected KL versus ChainMark's
$\geq \rho \log S$). The trade is calibration: none maps a
regulatory target $(\alpha, n_{\min}, \rho)$ to a configuration in
closed form, and the inverse-transform scheme's detector needs the
full key sequence rather than a short key. ChainMark is the distorting
alternative that supplies the closed-form $S^\star$ map.


\paragraph{Entropy-adaptive schemes: gates vs.\ detectors.}
SWEET~\citep{lee2023sweet} and EWD~\citep{lu2024ewd} both restrict
watermarking signal to \emph{high-entropy} positions, but at different
layers: SWEET applies a binary gate at generation time, while EWD
weights positions by entropy at detection time, approximating the
log-likelihood ratio of \autoref{thm:hdd}. Either way, detection must
recompute per-token entropy and so needs the generating model, unlike
ChainMark. ChainMark is gate-agnostic in its specification
(high-entropy, low-entropy, and surprisal-gap gates are all
admissible); for the head-to-head in \autoref{sec:experiments} we
adopt the high-entropy gate to match SWEET at the same budget, so the
operational dial we vary is $\rho$, not the gate identity.

\paragraph{Technical AI governance.} Article~50 of the EU AI
Act~\citep{euaiact2024} and the EU's draft Code of Practice on
Transparency~\citep{eucodepractice2025} require machine-readable
marking of AI-generated content; the OECD Hiroshima
Process~\citep{oecdhiroshima2024} provides a parallel international
framework. These instruments leave the detection regime
(FPR, quality floor, robustness) implicit. ChainMark's contribution, beyond
the watermarking scheme itself, is that it surfaces these parameters
analytically, producing a policy-to-configuration map auditable by
third parties (\autoref{sec:discussion}).

%% file: 04_preliminaries.tex
\section{Preliminaries}
\label{sec:prelim}

\textbf{Notation.} Let $\mathcal{V}$ denote the vocabulary of an
autoregressive language model, $V = |\mathcal{V}|$, and write
$\mathbf{t} = (t_1, \ldots, t_n) \in \mathcal{V}^n$ for a token
sequence of length $n$. The deployer and the verifier share a secret
key $\kappa \in \{0,1\}^*$ and the same tokenizer; the verifier
never sees the language model. Let
$\mathcal{H} : \{0,1\}^* \to \{0,1\}^{256}$ be SHA-256, modelled as
a random oracle in the security analysis (\autoref{thm:security}).
We write $[m] = \{0, \ldots, m-1\}$ and $z_\alpha$ for the one-sided
standard-normal quantile at level $\alpha$.

\textbf{Threat model.} The adversary observes watermarked text and
may paraphrase, translate, edit, splice, or delete tokens before
publication; we model these as a per-token modification rate
$\delta \in [0,1)$. The adversary is computationally bounded and
has no access to $\kappa$, to the language model, or to the
deployed detector as a queryable oracle at detection time
(\autoref{sec:discussion} discusses the adaptive-oracle adversary
that lies outside this model). The verifier sees only token text and runs in $O(n)$ hash
operations.

\textbf{Detection problem and quality.} Detection is a binary
hypothesis test on text alone, $H_0$ (non-watermarked) vs.\ $H_1$
(watermarked under key $\kappa$), at a fixed false-positive rate
$\alpha$, since false positives are operationally the costly error.
The watermark distribution must remain fluent under standard greedy
or low-temperature decoding, which we control via the per-token KL
divergence to the original next-token distribution and report as
perplexity inflation (\autoref{sec:method}).

%% file: 05_method.tex
\section{The ChainMark Scheme}
\label{sec:method}

\subsection{State Partition}

For each token id $v \in \mathcal{V}$ define
\begin{equation}
  \sigma_\kappa(v) \;=\; \mathcal{H}(\kappa \,\Vert\, v) \bmod S,
\end{equation}
which deterministically maps $\mathcal{V}$ onto $S$ equivalence
classes $\mathcal{V}_s = \{v : \sigma_\kappa(v) = s\}$, $s \in [S]$. Under
the random-oracle model $\sigma_\kappa$ is uniform and independent across
distinct tokens, so $\mathbb{E}[|\mathcal{V}_s|] = V/S$; knowledge
of $\sigma_\kappa$ on queried tokens reveals nothing about unqueried ones
(\autoref{thm:security}).

\subsection{Transition Topology}
\label{subsec:topology}

A successor function $\Sigma_k : [S] \to \binom{[S]}{k}$ assigns
each state a $k$-element set of legal next states. We require
$\Sigma_k$ to be $k$-regular in both out- and in-degree;
column-regularity is essential for the null-variance and robustness
arguments (\autoref{thm:k-regular}, \autoref{lem:variance}). The
default is the \emph{clockwork} chain at $k = 1$,
$\Sigma_1(s) = \{(s{+}1) \bmod S\}$: deterministic, periodic of
period $S$, uniform stationary distribution, and the smallest random
baseline $1/S$ among $1$-regular topologies. We additionally
evaluate the \emph{soft-cycle} variant $k = 2$,
$\Sigma_2(s) = \{(s{+}1) \bmod S,\,(s{+}2) \bmod S\}$, which doubles
the per-state successor budget at the cost of raising the random
baseline to $2/S$ (\autoref{sec:experiments}).

\subsection{Generation}

At step $i$ with current state $s_i = \sigma_\kappa(t_i)$, ChainMark forms
the legal token set $\mathcal{V}_{s_i}^\star =
\bigcup_{s' \in \Sigma_k(s_i)} \mathcal{V}_{s'}$ and masks the LM
logits outside $\mathcal{V}_{s_i}^\star$ to $-\infty$. At gated
positions ChainMark picks the argmax over the masked logits; at ungated
positions it samples from the original distribution. Each step costs one LM forward pass
plus an $O(V)$ logit mask (precomputed once per state).

\begin{algorithm}[t]
  \caption{ChainMark Watermark Embedding}
  \label{alg:embed}
  \begin{algorithmic}
    \STATE {\bfseries Input:} prompt $p$, key $\kappa$, states $S$,
      successor $\Sigma_k$, gate $g(\cdot)$, max tokens $N$
    \STATE $\mathbf{t} \leftarrow \mathrm{Tokenize}(p)$;
      $s \leftarrow \sigma_\kappa(\mathbf{t}_{-1})$
    \FOR{$i = 1$ {\bfseries to} $N$}
      \STATE $p_i \leftarrow \mathrm{softmax}(\mathcal{M}(\mathbf{t}))$
      \STATE $\mathcal{V}^\star \leftarrow
        \bigcup_{s' \in \Sigma_k(s)} \mathcal{V}_{s'}$
      \IF{$g(p_i) = 1$ {\bfseries and} $p_i(\mathcal{V}^\star) > 0$}
        \STATE $t_i \leftarrow \arg\max_{t \in \mathcal{V}^\star} p_i(t)$
      \ELSE
        \STATE $t_i \sim p_i$ \hfill{\footnotesize// free sampling at ungated positions}
      \ENDIF
      \STATE $\mathbf{t} \leftarrow \mathbf{t} \,\Vert\, t_i$;
        $s \leftarrow \sigma_\kappa(t_i)$
      \IF{$t_i = \texttt{EOS}$} \STATE {\bfseries break} \ENDIF
    \ENDFOR
    \STATE \textbf{return} $\mathrm{Decode}(\mathbf{t})$
  \end{algorithmic}
\end{algorithm}

\subsection{Watermark Budget and Entropy Gate}
\label{subsec:gates}

The mask is applied only at a subset of positions; the
\emph{watermark budget} is $\rho = \mathbb{E}_i[g_i] \in [0,1]$. Let
$H_i = -\sum_t p_i(t) \log p_i(t)$ be the next-token entropy and
$\Delta_i = p_i^{(1)} - p_i^{(2)}$ the gap between the top two
probabilities.

\begin{definition}[Gate family]
  \label{def:gate}
  $G_{\mathrm{all}}{:}\; g_i = 1$;\;
  $G_{H_{\mathrm{high}}}(\tau){:}\; g_i = \mathbb{1}[H_i > \tau]$;\;
  $G_{H_{\mathrm{low}}}(\tau){:}\; g_i = \mathbb{1}[H_i < \tau]$;\;
  $G_{\Delta}(\tau){:}\; g_i = \mathbb{1}[\Delta_i < \tau]$.
\end{definition}

Our default is $G_{H_{\mathrm{high}}}$, which masks only the model's
uncertain positions and matches the high-entropy gating policy of
SWEET~\citep{lee2023sweet} at the same budget;
\autoref{sec:experiments} reports the head-to-head. The threshold
$\tau$ is calibrated by quantile-matching on a held-out pilot so the
realised gate rate tracks $\rho$. $G_{H_{\mathrm{low}}}$ is the
\emph{anti-SWEET} ablation, and $G_{\Delta}$ targets near-tied
top-two positions where the greedy substitution regret is bounded
pointwise by $\Delta_i$.

\subsection{Detection}

The verifier tokenises the candidate text, re-derives states by
re-applying $\sigma_\kappa$, and computes the fraction of legal transitions
\begin{equation}
  \label{eq:phi}
  \phi(\mathbf{t})
    \;=\; \frac{1}{n-1} \sum_{i=1}^{n-1}
      \mathbb{1}\!\left[\sigma_\kappa(t_{i+1}) \in \Sigma_k(\sigma_\kappa(t_i))\right].
\end{equation}
With $p_0 = k/S$, the random-oracle null gives $\mathbb{E}[\phi] =
p_0$ with pairwise-zero covariance for any $k$-regular $\Sigma_k$
(\autoref{lem:variance}); the watermarked mean gap is
$\rho(1 - p_0)$ (\autoref{thm:detection}, \autoref{thm:k-regular}).
The verifier reports the standardised score
$z = (\phi - p_0)/\sqrt{p_0(1-p_0)/(n-1)}$ and declares $H_1$ when
$z > z_\alpha$. The closed-form calibration of \autoref{thm:detection}
(proved in \autoref{app:proofs}) selects $S$ given target FPR
$\alpha$, budget $\rho$, and length $n$; the same identities yield
robustness under per-token modification rate $\delta$
(\autoref{thm:robustness}).

\begin{algorithm}[t]
  \caption{ChainMark Watermark Detection (model-free)}
  \label{alg:detect}
  \begin{algorithmic}
    \STATE {\bfseries Input:} text $x$, key $\kappa$, states $S$,
      successor $\Sigma_k$, level $\alpha$
    \STATE $\mathbf{t} \leftarrow \mathrm{Tokenize}(x)$;\;
      $n \leftarrow |\mathbf{t}|$;\; $c \leftarrow 0$
    \FOR{$i = 1$ {\bfseries to} $n-1$}
      \IF{$\sigma_\kappa(t_{i+1}) \in \Sigma_k(\sigma_\kappa(t_i))$}
        \STATE $c \leftarrow c + 1$
      \ENDIF
    \ENDFOR
    \STATE $\phi \leftarrow c/(n-1)$;\;
      $p_0 \leftarrow k/S$
    \STATE $z \leftarrow (\phi - p_0)/\sqrt{p_0(1-p_0)/(n-1)}$
    \STATE \textbf{return} $(z > z_\alpha,\ \phi,\ z)$
  \end{algorithmic}
\end{algorithm}

\autoref{alg:detect} runs in $O(n)$ hash operations without LM
access, making ChainMark deployable as a third-party audit primitive.

%% file: 06_theory.tex
\section{Theoretical Properties}
\label{sec:theory}

This section gives an intuitive tour of three core results
(detection, robustness, $k$-regular generalisation) plus two
supporting properties (security against an oracle-blind adversary,
and an LM-aware locally most powerful detector).  Formal
statements and proofs are deferred to \autoref{app:proofs}; we collect
their pointers here so the body remains a narrative.  Empirical
anchoring of \autoref{thm:detection} appears in
\autoref{subsec:exp5-calibration}.

\paragraph{Detection bound and calibration.}
Under the null hypothesis that the input is a uniformly random token
sequence, the fingerprint score $\phi$ of \autoref{eq:phi} is
approximately Gaussian with mean $1/S$ and variance
$(1/S)(1-1/S)/(n-1)$, while a watermarked sequence with gate density
$\rho$ shifts the mean to $1/S + \rho(S{-}1)/S$.  Standardising gives
the expected value of the detection statistic under the watermarked
alternative in closed form, $z(\rho, S, n) = \rho\sqrt{(S-1)(n-1)}$,
i.e.\ its noncentrality parameter (distinct from the sample statistic
$z$ evaluated on observed text).
Inverting at false-positive level $\alpha$ under the midpoint
threshold convention yields a state-count rule that practitioners can
read off directly,
\[
  S^\star(n,\rho,\alpha)
    \;=\; \left\lceil \frac{4\, z_\alpha^2}{\rho^2 (n-1)} + 1 \right\rceil.
\]
The factor-of-$4$ comes from evaluating the standardised signal at
the midpoint between the null and alternative means. This gives the
regulator a $2\times$ safety margin relative to the one-sided
$z_\alpha$ test reported in our empirical tables; those tables run
at a tighter operating threshold than $S^\star$ requires, so
empirical TPR exceeds the conservative midpoint prediction. Formal
statement and proof in \autoref{thm:detection} of
\autoref{app:proofs}; the corresponding lookup table is
\autoref{tab:cal}.

\paragraph{Empirical-SD calibration recipe.}
The closed-form $z_\alpha$ assumes i.i.d.\ uniform tokens over the
SHA-256 partition; on natural-language text this drifts (empirically
$1.7$--$2.0\%$ at the nominal $1\%$ target,
\autoref{tab:cal-sstar}). The fix is a one-corpus recalibration:
estimate the empirical mean and SD of $z$ on a non-watermarked
sample of the deployed LM (or a domain-matched corpus) and use
$z^\star = \hat\mu + z_\alpha\,\hat\sigma$ as the operating
threshold. Under the same Gaussian-tail null this restores the
target FPR ($1.17\%$ at $\alpha=1\%$ on our $n=3000$ pooled
corpus) without changing $S^\star$ or touching the watermarked
side, so TPR is unaffected. The empirical $z$-null is mildly
leptokurtic, so for tighter $\alpha$ or large $S$ the
empirical-quantile drop-in $z^\star = \hat F^{-1}(1{-}\alpha)$ is the
robust alternative. Formal recipe in
\autoref{thm:detection-recal}; empirical evaluation in
\autoref{app:fpr-recipes}.

\paragraph{Universal robustness threshold.}
Suppose an adversary independently replaces each token with
probability $\delta$ by a fresh token whose state is uniform on
$[S]$.  Both the watermarked-pair signal and the
midpoint-threshold gap scale by the same affine factor
$\rho(S{-}1)/S$, so the post-attack-to-pre-attack $z$ ratio collapses
to $(1-\delta)^2$, independently of $(S, \rho, n)$.  The critical
edit fraction at which detection fails is therefore the universal
constant $\delta^\star = 1 - 1/\sqrt{2} \approx 0.293$.  Formal
statement and proof in \autoref{thm:robustness}.

\paragraph{$k$-regular generalisation.}
The clockwork transition is the simplest member of a broader family:
any $k$-regular adjacency $T$ (every state has exactly $k$ allowed
successors and predecessors) yields a valid ChainMark scheme.  Replacing
the random baseline $1/S$ by $p_0 = k/S$ gives the same calibration
identities, the same midpoint critical fraction $\delta^\star$, and a
single quality-versus-detection dial in $k$.  Column-regularity is
load-bearing: it is what makes adjacent indicator pairs have zero
covariance under the null, so the variance formula carries through.
Formal statement and proof in \autoref{thm:k-regular}.  The random
baseline $p_0 = k/S$ relies on the SHA-256 partition being
approximately uniform across the vocabulary, an assumption we
inherit from the standard avalanche property of cryptographic hashes.

\paragraph{Security and an LM-aware optimal detector.}
The state map
$\sigma_\kappa(t) = \mathcal{H}(\kappa \,\Vert\, t) \bmod S$ is
modelled as a random oracle: an adversary without access to $\kappa$,
the LM, or reference text cannot predict $\sigma_\kappa(t^\star)$ on
a fresh token with non-negligible advantage in the key min-entropy,
and therefore cannot statistically distinguish ChainMark output from random
text (\autoref{thm:security}; computational, not
information-theoretic, and excluding adversaries with LM access). A
verifier that \emph{does} have LM access can sharpen detection by
replacing $\phi$ with an entropy-weighted log-likelihood ratio that is
asymptotically locally most powerful under the random-oracle
product-form approximation (\autoref{thm:hdd}). Detector pseudocode
appears as \autoref{alg:detect-app}.

%% file: 07_experiments.tex

\section{Experiments}
\label{sec:experiments}

We run four experiments on three instruction-tuned 7--8B LLMs across
four domains. Per-prompt records stream to JSONL for reproducibility.
\autoref{fig:experiment-flow} traces a single ChainMark generation
end-to-end on a wiki-domain prompt: the model receives the user
prompt, ChainMark masks logits at gated positions to keep only
states $(s+1) \bmod S$, and the detector re-derives every token's
state from the key alone and reports a $z$-score.

\begin{figure}[t]
  \centering
  \definecolor{c0}{HTML}{E15A5A}
  \definecolor{c1}{HTML}{F0A030}
  \definecolor{c2}{HTML}{E8C12F}
  \definecolor{c3}{HTML}{4FAE5C}
  \definecolor{c4}{HTML}{4A8FD3}
  \definecolor{cmblue}{HTML}{1F6AD1}
  \definecolor{auditred}{HTML}{B22222}
  \newcommand{\tk}[2]{\tikz[baseline=(X.base)]\node[%
      rounded corners=1pt, draw=black!25, line width=0.2pt,%
      inner xsep=1.5pt, inner ysep=0.8pt, font=\scriptsize,%
      fill=#1!35] (X) {#2};}
  \newcommand{\tg}[2]{\tikz[baseline=(X.base)]\node[%
      rounded corners=1pt, draw=black!35, line width=0.3pt,%
      inner xsep=1.5pt, inner ysep=0.8pt, font=\scriptsize,%
      fill=#1!45] (X) {\underline{#2}};}
  \newcommand{\tarr}{\hspace{0.5pt}\textcolor{cmblue}{\scriptsize\ensuremath{\rightharpoonup}}\hspace{0.5pt}\allowbreak}
  \begin{minipage}{\columnwidth}
    \scriptsize
    \setlength{\fboxsep}{4pt}%
    \setlength{\fboxrule}{0.4pt}%

    {\small\bfseries 1.\,User prompt}\par
    \vspace{1pt}%
    \noindent\fcolorbox{black!25}{black!4}{%
      \begin{minipage}{\dimexpr\columnwidth-2\fboxsep-2\fboxrule\relax}
        \texttt{Explain photosynthesis in a comprehensive way.}
      \end{minipage}}\par
    \vspace{2pt}\centerline{\textcolor{cmblue}{\small$\Downarrow$}}\vspace{1pt}%

    {\small\bfseries 2.\,Watermarked output}\par
    \vspace{1pt}%
    \noindent\fcolorbox{cmblue!50}{cmblue!4}{%
      \begin{minipage}{\dimexpr\columnwidth-2\fboxsep-2\fboxrule\relax}
        \raggedright
        \tk{c2}{Photosynthesis} \tk{c3}{is}
        \tg{c4}{the}\tarr\tg{c0}{process}\tarr\tg{c1}{by}\tarr\tg{c2}{which}\tarr\tg{c3}{plants}
        \tk{c4}{convert}
        \tg{c0}{sunlight}\tarr\tg{c1}{into}\tarr\tg{c2}{chemical}\tarr\tg{c3}{energy}\tarr\tg{c4}{and}
        \tk{c0}{releases}\,\textcolor{black!50}{\dots}
      \end{minipage}}\par
    \vspace{2pt}\centerline{\textcolor{auditred}{\small$\Downarrow$}}\vspace{1pt}%

    {\small\bfseries 3.\,Auditor verdict}\par
    \vspace{1pt}%
    \noindent\fcolorbox{auditred!55}{auditred!5}{%
      \begin{minipage}{\dimexpr\columnwidth-2\fboxsep-2\fboxrule\relax}
        \centering
        $\phi = \tfrac{11}{13} = 0.85$,\;
        $z = 5.30 > z_{0.01}{=}2.326$\;$\Rightarrow$\;
        \textcolor{auditred}{\textbf{WATERMARKED}}
      \end{minipage}}\par
    \vspace{3pt}%

    \noindent\centerline{%
      \tk{c0}{\,$s_0$\,}\tarr\tk{c1}{\,$s_1$\,}\tarr\tk{c2}{\,$s_2$\,}\tarr\tk{c3}{\,$s_3$\,}\tarr\tk{c4}{\,$s_4$\,}\tarr\tk{c0}{\,$s_0$\,}%
    }%
  \end{minipage}
  \caption{\textbf{End-to-end ChainMark trace} on a wiki prompt
    (Llama-3.1-8B-Instruct, $S{=}5$, $\rho{=}0.5$, $14$-token
    excerpt). Every output tile inherits its colour from
    $\sigma_\kappa(t)\in\{0,1,2,3,4\}$ (legend at bottom).
    \underline{Underlined} tokens are gated steps where ChainMark
    masked logits to enforce $s_{i{+}1}{=}(s_i{+}1)\bmod 5$ and the
    \textcolor{cmblue}{$\rightharpoonup$} arrows trace the forced
    clockwork walk; un-underlined tokens are free-sampled at
    $T{=}0.7$. The auditor in panel 3 re-derives the entire colour
    sequence from $\kappa$ alone, counts $11/13$ valid transitions
    (green bar), and emits $(\phi, z)$ in $O(n)$ hash operations,
    no LM forward pass.}
  \label{fig:experiment-flow}
\end{figure}

\subsection{Setup}
\label{subsec:exp-setup}

\textbf{Models.} Three instruction-tuned, openly licensed
checkpoints in the 7--8B parameter range:
Llama-3.1-8B-Instruct~\citep{grattafiori2024llama3},
Qwen-2.5-7B-Instruct~\citep{yang2024qwen25}, and
Mistral-7B-Instruct-v0.3~\citep{jiang2023mistral}. The same secret
key $\kappa$ is fixed across every run.

\textbf{Domains.} Four prompt domains stress different entropy regimes:
\emph{code} (HumanEval problem stems), \emph{factual} (short closed-form
knowledge prompts), \emph{wiki} (open-ended ``Explain $X$\ldots''
prompts over a curated concept list), and \emph{writing}
(creative-completion prompts).

\textbf{Generation and detection.} Temperature $T = 0.7$, top-$p = 1$,
token budget $n = 200$ at every cell, with one deterministic seed so
the same indices are watermarked, attacked, and detected across
methods. Detection uses the threshold $z_\alpha$ from
\autoref{thm:detection}, reused across methods for cross-method
consistency; the empirically calibrated $1\%$-FPR head-to-head
(\autoref{app:headline-recalibrated}) confirms the resulting TPR
ranking is not an FPR artefact.

\textbf{ChainMark configuration.} Unless stated otherwise, ChainMark runs use
$S = 5$ states, target gate budget $\rho = 0.5$, the high-entropy gate
$G_{H_{\mathrm{high}}}$, and clockwork ($k{=}1$) topology.

\textbf{Baselines.} \emph{KGW}~\citep{kirchenbauer2023watermark} at
$\gamma = 0.5$, logit bias $\delta_{\mathrm{KGW}} = 2$ (the canonical
operating point reported in the original paper at this $\gamma$).
\emph{SWEET}~\citep{lee2023sweet} is reproduced as a matched-budget
re-implementation: same $\gamma = 0.5$ and $\delta_{\mathrm{KGW}} = 2$
but the green-list bias is applied only at the top-$\rho{=}0.5$
fraction of positions by token-level entropy, so the budget aligns
with ChainMark's $\rho$. Per-domain prompt-and-generation examples (with a token-level
gating walkthrough) appear in \autoref{app:examples}; the
end-to-end trace in \autoref{fig:experiment-flow} above gives one
illustration on the wiki prompt
\texttt{Explain photosynthesis in a comprehensive way.}.

\begin{figure*}[t]
  \centering
  \begin{subfigure}[t]{0.32\linewidth}
    \centering
    \includegraphics[width=\linewidth]{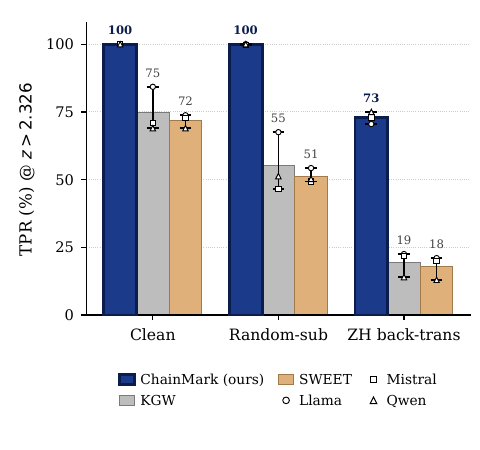}
    \caption{}\label{fig:headline-tpr}
  \end{subfigure}\hfill
  \begin{subfigure}[t]{0.32\linewidth}
    \centering
    \includegraphics[width=\linewidth]{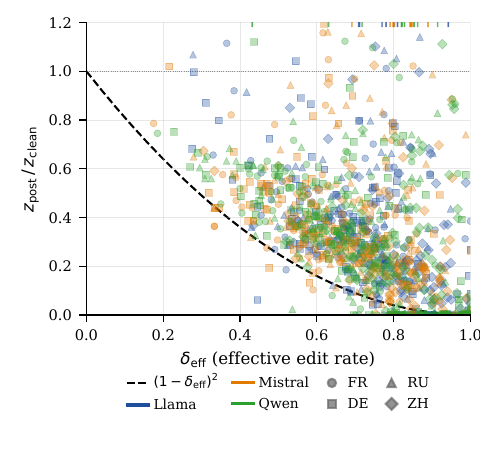}
    \caption{}\label{fig:translation-decay}
  \end{subfigure}\hfill
  \begin{subfigure}[t]{0.32\linewidth}
    \centering
    \includegraphics[width=\linewidth]{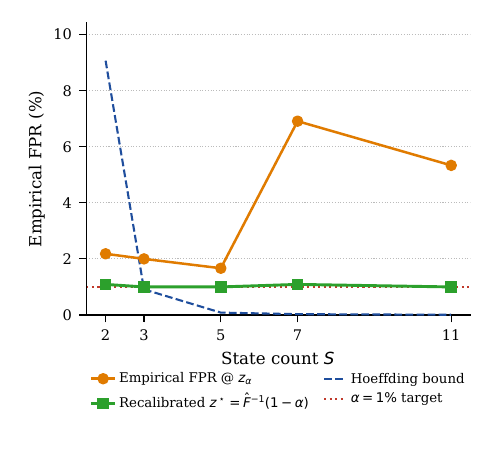}
    \caption{}\label{fig:cal-fpr}
  \end{subfigure}
  \caption{\textbf{Experimental headlines.}
    \textbf{(a)}~Head-to-head TPR: ChainMark vs KGW vs SWEET, aggregated
      over 3 LLMs $\times$ 4 domains (\S\ref{subsec:exp2-headline}).
    \textbf{(b)}~Translation robustness: empirical decay under NLLB-200
      EN$\to$\{FR,DE,RU,ZH\}$\to$EN vs the closed-form
      $(1-\delta_{\mathrm{eff}})^2$ diagonal (\autoref{thm:robustness}),
      with $\delta_{\mathrm{eff}}$ the empirical token-edit fraction;
      $m{=}100$ generations per (model, pivot)
      (\S\ref{subsec:exp3-translation}).
    \textbf{(c)}~Calibration anchor: empirical FPR vs state count $S$ with
      the closed-form Hoeffding bound (\autoref{thm:detection})
      (\S\ref{subsec:exp5-calibration}).}
\end{figure*}


\subsection{Matched-Budget Head-to-Head}
\label{subsec:exp2-headline}

We compare ChainMark against KGW and SWEET on $1\,200$ generations
per method ($3$ models, $4$ domains, $100$ prompts each), with
$200$ tokens per generation, and score each generation under three
conditions: clean, random substitution at rate
$\delta{=}0.20$, and NLLB-200 EN$\to$ZH$\to$EN
back-translation. Aggregated across all $1\,200$ cells per method
(\autoref{tab:headline-aggregate}, \autoref{fig:headline-tpr}),
ChainMark detects $100\%$ of clean and $99.9\%$ of randomly perturbed
generations and retains $72.8\%$ TPR after ZH back-translation, where
KGW and SWEET drop from $\sim\!\!75\%$/$\sim\!\!72\%$ clean to
$19.4\%$/$18.0\%$ post-attack. The gap is consistent across every
(model, domain) cell (\autoref{fig:headline-percell}); ChainMark
matches or leads every clean cell and strictly leads every
post-ZH cell. ChainMark pays a $\approx 2\times$ self-perplexity gap
($3.66$ vs $1.80$/$1.93$) for this gain. SWEET is an entropy-gating
ablation of KGW with the gating policy held fixed against ChainMark.

\begin{table}[t]
  \centering
  \caption{\textbf{Headline aggregate} at the analytic threshold
    $z_\alpha = 2.326$ ($\alpha = 0.01$, Gaussian-tail null),
    aggregated over $3$ LLMs $\times$ $4$ domains $\times$ $100$
    prompts at $n{=}200$ tokens per cell. PPL is the median across
    the $12$ cell medians on the watermarked model's own logits
    (caveats in \autoref{sec:discussion}). Empirical FPR drift at
    this threshold appears separately in \autoref{tab:cal-sstar}.
    Bold = best per column.}
  \label{tab:headline-aggregate}
  \small
  \setlength{\tabcolsep}{3pt}
  \begin{tabular}{lcccc}
    \toprule
    Method & TPR$_{\mathrm{clean}}$\,(\%) & TPR$_{\mathrm{rnd}}$\,(\%) &
            TPR$_{\mathrm{ZH}}$\,(\%) & PPL$\downarrow$ \\
    \midrule
    \textbf{ChainMark} & \textbf{100.0} & \textbf{99.9} & \textbf{72.8} & 3.66 \\
    KGW                & \phantom{0}74.8 & 55.1 & 19.4 & 1.93 \\
    SWEET              & \phantom{0}71.8 & 51.2 & 18.0 & \textbf{1.80} \\
    \bottomrule
  \end{tabular}
\end{table}

\begin{figure}[t]
  \centering
  \includegraphics[width=\linewidth]{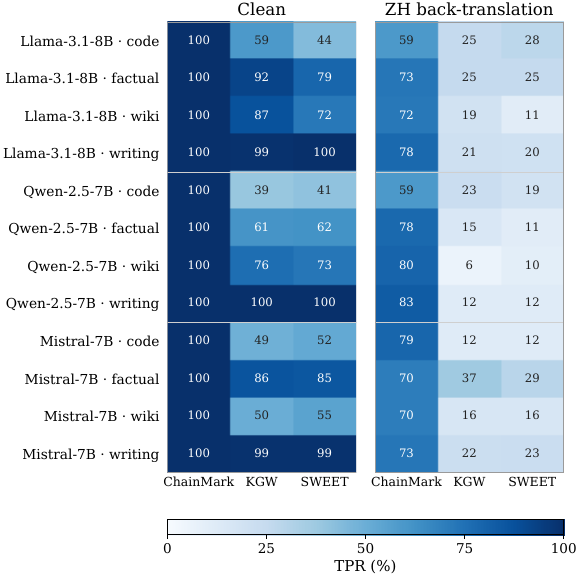}
  \caption{\textbf{Per-(model, domain) head-to-head TPR (\%)} at
    $z_\alpha=2.326$ ($100$ prompts per cell, $n{=}200$ tokens).
    Each row is one (model, domain) cell; left panel is clean, right
    panel is ZH back-translation. ChainMark matches or leads
    every clean cell and strictly leads every cell post-ZH-back-translation.
    Each cell is annotated with its TPR (\%), in white on dark
    (high-TPR) cells and dark on light (low-TPR) cells for contrast.}
  \label{fig:headline-percell}
\end{figure}

\subsection{Translation Robustness}
\label{subsec:exp3-translation}

Each ChainMark-watermarked output has $n{=}200$ tokens; across
$3$ models $\times$ $4$ domains $\times$ $25$ prompts we obtain
$300$ outputs, each round-tripped through NLLB-200~\citep{nllbteam2022nllb}
on four pivot languages (FR, DE, RU, ZH) for $1\,200$ generations
total, and we compute the empirical
$\delta_{\mathrm{eff}}$ per output via token-edit distance. Plotting
the post-attack-to-clean $z$ ratio against $\delta_{\mathrm{eff}}$
(\autoref{fig:translation-decay}) shows the observed ratios sit at or
above the closed-form $(1-\delta_{\mathrm{eff}})^2$ diagonal across
pivots, models, and domains, which empirically anchors the universal
half-life $\delta^\star = 1 - 1/\sqrt{2}$ from
\autoref{thm:robustness} and confirms the bound is conservative
rather than tight. The claim is stronger than synthetic
random-substitution can support, since NLLB-200 produces
semantically coherent rewrites rather than i.i.d.\ token noise.

\subsection{$k$-Regular at $k{=}2$}
\label{subsec:exp4-kregular}

Instantiating $k = 2$ (each state has two allowed successors) at
$S = 5$ predicts $p_0 = k/S = 0.40$. We run $100$ wiki-domain prompts
on each of the three models with ZH back-translation as the only
attack. \autoref{tab:kregular} reports clean
$\bar\phi_{\mathrm{cl}} \geq 0.88$ (well above the $p_0 = 0.40$
prediction) and a post-attack $z$-score that sits above the
$(1 - \delta_{\mathrm{eff}})^2$ floor of \autoref{thm:robustness}
(observed ratio $0.23$--$0.27$ vs.\ the predicted floor
$(1-\delta_{\mathrm{eff}})^2\approx 0.07$--$0.09$ evaluated at the
wiki-domain ZH edit fraction measured here, which is milder than the
$4$-domain headline $\delta_{\mathrm{eff}}\approx 0.81$; i.e.\
the bound is conservative as in \autoref{subsec:exp3-translation}),
jointly re-certifying
\autoref{thm:k-regular} and \autoref{thm:robustness} at a non-trivial
$k$.

\begin{table}[t]
  \centering
  \caption{\textbf{$k$-regular ChainMark at $k{=}2$ (soft-cycle), $S{=}5$,
    $\rho{=}0.5$.} Predicted random baseline $p_0 = 0.40$; wiki domain,
    $100$ prompts. The clean $\bar\phi_{\mathrm{cl}}$ values
    ($\geq 0.88$) sit well above $p_0$
    (\autoref{thm:k-regular}); the post-attack $z$ ratio re-certifies
    \autoref{thm:robustness} at $k{=}2$ (\autoref{app:proofs}).}
  \label{tab:kregular}
  \small
  \setlength{\tabcolsep}{3.5pt}
  \begin{tabular}{lrrrrr}
    \toprule
    Model & $\bar\phi_{\mathrm{cl}}$ & $\bar z_{\mathrm{cl}}$ & TPR$_{\mathrm{cl}}$
          & $\bar z_{\mathrm{ZH}}$ & TPR$_{\mathrm{ZH}}$ \\
    \midrule
    Llama-3.1-8B & 0.883 & 13.91 & 100\% & 3.19 & 54\% \\
    Mistral-7B   & 0.965 & 16.23 & 100\% & 3.91 & 70\% \\
    Qwen-2.5-7B  & 0.934 & 15.38 & 100\% & 4.12 & 71\% \\
    \bottomrule
  \end{tabular}
\end{table}

\subsection{Calibration Anchor}
\label{subsec:exp5-calibration}

We sweep $S \in \{2, 3, 5\}$ across the three models with
$4$ domains $\times$ $25$ prompts at $n{=}200$, recording both a
watermarked generation (ChainMark at the canonical operating point, varying
only $S$) and a non-watermarked baseline drawn from the same prompt.
Non-watermarked outputs feed the empirical FPR estimate at the
analytic $z_\alpha$; watermarked outputs feed the TPR.
\autoref{tab:cal-sstar} shows watermarked $z$ overshoots the
closed-form prediction $\rho\sqrt{(S{-}1)(n{-}1)}$ by $50$--$80\%$, so
TPR is at $100\%$ throughout, while empirical FPR sits above the
$\alpha=1\%$ target by $\sim\!\!1$ pp (\autoref{fig:cal-fpr},
$1.7$--$2.0\%$). The closed form gives the right $S^\star$ ordering to
within one state; for strict $1\%$ FPR a deployer applies the
per-$S$ empirical-quantile threshold recalibration of
\autoref{thm:detection-recal} (\autoref{sec:discussion},
\autoref{app:fpr-recipes}).

\begin{table}[t]
  \centering
  \caption{\textbf{Empirical anchor of the calibration map
    $S^\star(n,\rho,\alpha)$ from \autoref{thm:detection}.} For each
    $S$, aggregated across $3$ LLMs $\times$ $4$ domains $\times$ $25$
    prompts ($\approx 300$ watermarked + $300$ non-watermarked).
    $z_{\mathrm{wm}}^{\mathrm{pred}} = \rho\sqrt{(S{-}1)(n{-}1)}$.
    Watermarked $z$ overshoots theory by $50$--$80\%$;
    non-watermarked FPR$_{\mathrm{nwm}}$ exceeds the $\alpha{=}1\%$
    target at large $S$ due to non-i.i.d.\ structure in
    natural-language token sequences. The last two columns
    apply the empirical-quantile recalibration of
    \autoref{thm:detection-recal} per-$S$ on the same NWM corpus:
    FPR$_{\mathrm{nwm}}^{\mathrm{recal.}}$ lands uniformly below
    $\alpha{=}1\%$, while TPR$_{\mathrm{wm}}^{\mathrm{recal.}}$
    stays at $\geq 95\%$ on the watermarked side (the recalibration
    only lifts the threshold, so detection power is essentially
    preserved).}
  \label{tab:cal-sstar}
  \small
  \setlength{\tabcolsep}{3pt}
  \begin{tabular}{rrrrrrr}
    \toprule
    $S$ & $\bar z_{\mathrm{wm}}^{\mathrm{obs}}$
        & $z_{\mathrm{wm}}^{\mathrm{pred}}$
        & TPR$_{\mathrm{wm}}$
        & FPR$_{\mathrm{nwm}}$
        & TPR$_{\mathrm{wm}}^{\mathrm{recal.}}$
        & FPR$_{\mathrm{nwm}}^{\mathrm{recal.}}$ \\
    \midrule
    2  & 10.87 & 7.05  & 100\%   & 2.00\% & \phantom{0}98.7\% & 0.67\% \\
    3  & 17.41 & 9.97  & 100\%   & 2.00\% & 100.0\%          & 0.67\% \\
    5  & 25.38 & 14.11 & 100\%   & 1.67\% & 100.0\%          & 0.34\% \\
    \bottomrule
  \end{tabular}
\end{table}

%% file: 08_discussion.tex
\section{Discussion}
\label{sec:discussion}

\subsection{When to Use What}

The calibration $S^\star(n_{\min}, \rho, \alpha)$ of
\autoref{thm:detection} reduces operational deployment to three
inputs: a minimum text length $n_{\min}$, a target false-positive
rate $\alpha$, and a watermark budget $\rho$. For long-form content
the calibration anchors at small $S$ with $\rho$ near $1/2$;
\autoref{subsec:exp2-headline} shows ChainMark substantially exceeds
KGW and SWEET at the analytic $z_\alpha=2.326$ threshold (the
$1\%$-empirical-FPR recalibration in \autoref{app:headline-recalibrated}
preserves the lead vs.\ KGW), at a $\approx 2\times$ higher
self-perplexity ($3.66$ vs $1.80$--$1.93$;
\autoref{tab:headline-aggregate}).
For short outputs $S^\star$ rises sharply with stricter $\alpha$,
so deployers tolerate a higher state count or shift to a longer floor. When cross-lingual rewrite or random substitution
dominates the attack profile
(\autoref{subsec:exp2-headline}, \autoref{subsec:exp3-translation}),
the universal threshold $\delta^\star$ of \autoref{thm:robustness}
bounds the worst case across all gating choices.

\subsection{Empirical SD Recalibration Closes the FPR Gap}

Recomputing the threshold from the empirical SD of $z$ on a
non-watermarked corpus, $z^\star = \hat\mu + z_\alpha\,\hat\sigma$,
brings empirical FPR from $2.1\%$ to $1.2\%$ at $\alpha=1\%$
while preserving TPR$=100\%$ on the pooled $n{=}3000$ corpus
(\autoref{app:fpr-recipes}); the per-$S$ empirical-quantile drop-in
in \autoref{tab:cal-sstar} costs at most $1.3$\,pp of TPR
(TPR$_{\mathrm{wm}}^{\mathrm{recal.}}\!\geq 98.7\%$ across $S\!\in\!\{2,3,5\}$). The
plug-in is calibrated at $\alpha=1\%$ but the empirical $z$-null is
leptokurtic (excess kurtosis $\approx 1.4$), so for tighter $\alpha
\leq 0.5\%$ the empirical-quantile recipe
$z^\star = \hat F^{-1}(1{-}\alpha)$ is the robust drop-in (and draws
the recalibrated curve in \autoref{fig:cal-fpr}). The $1\%$-FPR
head-to-head appears in \autoref{app:headline-recalibrated}; ChainMark
also admits an empirical FPR$=0\%$ regime via threshold lifting
(\autoref{app:fpr-zero}), which KGW and SWEET cannot, since
their watermarked and non-watermarked $z$ distributions overlap.
Failure modes we tried but did not adopt (Newey--West HAC, $k$-skip,
stopword filtering, HDD-lite) are tabulated in
\autoref{app:fpr-recipes}.

\subsection{Limitations}

\paragraph{Experimental scope.} The headline grid fixes $\rho=0.5$,
KGW/SWEET at $\gamma{=}0.5,\delta_{\mathrm{KGW}}{=}2$, and decoding
at $T{=}0.7$, top-$p{=}1$, $n{=}200$ with one deterministic
prompt-order seed (no replicate-seed CIs). The fleet is three
open-weight instruction-tuned 7--8B models; base, smaller, and
$\geq 10$B models are deferred, as is $k$-regular ChainMark at
$k\geq 3$ (we validate $k\in\{1,2\}$). The robustness suite is
translation-only: we do \emph{not} run grammar-preserving paraphrase
(e.g.\ DIPPER), so the strongest claim is against translation and
uniform substitution, not adversarial paraphrasing. The translation
evaluation covers all $12$ (model, domain) cells
(\autoref{tab:translation-pivots}), but the SWEET row of the
recalibrated head-to-head (\autoref{tab:headline-recalibrated}) is
omitted since the SWEET null $z$-distribution was not collected on
the FPR corpus.

\paragraph{Quality and detector caveats.} Generation quality is
self-perplexity under each model's own logits; we do not report
external-LM PPL, MAUVE, or human-rater scores. The detector is
``model-free'' in needing no LM forward pass at detection, but
requires the same tokenizer used at generation; an auditor handed
plain text alone would enumerate candidate tokenizers.
Both \autoref{thm:security} and \autoref{thm:robustness} cover an
oracle-blind adversary only; an attacker with adaptive
detector-query access (or $\sigma_\kappa$ access) can choose
substitution targets to hit valid transitions, and is outside our
threat model (query complexity open).

\paragraph{FPR drift and theorem scope.} The empirical FPR at the
analytic threshold $z_\alpha=2.326$ drifts to $1.7$--$2.0\%$ on
non-watermarked LLM output (\autoref{tab:cal-sstar}):
natural-language tokens are not i.i.d.\ uniform over the SHA-256
partition, so the closed-form Hoeffding envelope
(\autoref{fig:cal-fpr}) sits below the empirical curve. Headline TPR
numbers (\autoref{tab:headline-aggregate}--\autoref{fig:headline-percell})
are at the analytic threshold for cross-method consistency, not an
empirically calibrated $1\%$ FPR per cell. \autoref{thm:robustness}
assumes i.i.d.\ uniform substitution at rate $\delta$; NLLB
back-translation (and the deferred paraphrase extension) preserves
grammatical structure and correlates substitutions across positions. Our empirical curves
(\autoref{fig:translation-decay}) sit \emph{above} the
$(1-\delta_{\mathrm{eff}})^2$ slope, so the bound is conservative,
not tight; $\delta_{\mathrm{eff}}$ is an effective-edit-rate proxy.

%% file: 09_conclusion.tex

\section{Conclusion}
We presented ChainMark, an active watermark with two operator-facing
properties: a closed-form calibration $S^\star(n_{\min}, \rho,
\alpha)$ mapping a regulator's specification (target FPR, text length,
budget) to the minimum state count, and a detector needing the key and
tokenizer but no language model. Under uniform substitution it has a
robustness threshold $\delta^\star = 1 - 1/\sqrt{2} \approx 29.3\%$
invariant in $(S,\rho,n)$, and both extend to every $k$-regular
transition topology (\autoref{thm:detection}, \autoref{thm:robustness},
\autoref{thm:k-regular}). Head-to-head against KGW and SWEET on three
instruction-tuned LLMs across four domains at matched budget,
ChainMark retains substantially more detection signal under
translation and random-substitution attacks, at roughly $2\times$ the
baselines' self-perplexity. The post-attack $z$ ratio sits above the
$(1-\delta_{\mathrm{eff}})^2$ i.i.d.\ scaling, so the bound is
conservative, not tight (\autoref{sec:experiments}).

Five directions are immediate: (i)~evaluate against grammar-preserving
paraphrase (e.g.\ DIPPER), the strongest attack left untested;
(ii)~tighten the \emph{analytic} FPR bound under non-i.i.d.\
statistics, where the empirical-SD recalibration of
\autoref{app:fpr-recipes} closes the deployment gap but the closed form
stays loose at large $S$ (\autoref{tab:cal-sstar}); (iii)~extend to
longer-text regimes where small $S^\star$ operates; (iv)~validate
$k$-regular topologies at $k\geq 3$; and (v)~characterise base LLMs,
whose entropy profile differs from the instruct-tuned fleet.


%% file: 10_impact.tex
\section*{Impact Statement}

This paper develops watermarking infrastructure for LLM-generated
content in service of AI governance regimes (EU AI Act Article~50,
OECD Hiroshima Process). The contribution is dual-use, and we
flag four asymmetric harms.

\textbf{Authorship tracking and chilling effects.} Stronger
watermarking exposes writers who relied on undetectable LLM use.
Marginalised users (ESL writers, students under inequitable AI
policies, whistleblowers reformulating sensitive content) bear
asymmetric harm relative to incumbent users.

\textbf{End-user perplexity cost.} ChainMark imposes a $\approx
2\times$ self-perplexity cost ($3.66$ vs the $1.80$--$1.93$ baseline
range; \autoref{tab:headline-aggregate}). Deployers adopting
ChainMark to satisfy a regulatory mark therefore impose a measurable
quality cost on \emph{every} user whose generation is gated, not
just on adversaries who would try to evade detection.

\textbf{False positives.} Mislabelling human-written text as
machine-generated has real reputational and legal cost. We
instrument the detector with a closed-form, regulator-facing target
false-positive rate $\alpha$ rather than an unspecified threshold;
the discussed empirical-quantile recalibration (\autoref{sec:discussion},
\autoref{app:fpr-recipes}) is the recipe a deployer should run
before invoking detection on any production content.

\textbf{Adaptive adversaries are out of scope.} Our security and
robustness theorems cover an oracle-blind adversary; a regulator
deploying ChainMark must not assume the watermark survives an
attacker with detector-query access or with $\sigma_\kappa$ access
(\autoref{sec:discussion}). The audit primitive is honest about
non-adversarial mislabelling, not about adversarial spoofing or
scrubbing under realistic API-query budgets.

Any deployment should publish the detection regime ($\alpha$,
$n_{\min}$, $\rho$, the recalibration recipe used) so users
understand the conditions under which their outputs are, and are
not, marked.

%% file: 11_proofs.tex
\section{Proofs and Formal Statements}
\label{app:proofs}

This appendix collects the formal versions of the results previewed
in \autoref{sec:theory}.  Throughout, $\sigma_\kappa(t) = \mathcal{H}(\kappa
\,\Vert\, t) \bmod S$ denotes the keyed state map and is treated as a
random oracle (uniform on $[S]$ and independent across
distinct tokens).  $\phi$ is the fingerprint score of
\autoref{eq:phi}, and we write $X_i =
\mathbf{T}_{\sigma_\kappa(t_i),\sigma_\kappa(t_{i+1})}$ for the validity
indicator at step $i$, with $T = \mathbf{T}^{\mathrm{clk}}$ in the
clockwork case.

\subsection{Null Variance Lemma}
\label{subsec:proof-variance}

The null variance underpins both the detection $z$-score and the
universal robustness threshold; we isolate it as a lemma.

\begin{lemma}[Null variance with zero pairwise covariance]
  \label{lem:variance}
  Let $T \in \{0,1\}^{S \times S}$ be $k$-regular: every row and
  every column has exactly $k$ ones, with $1 \leq k \leq S{-}1$, and
  set $p_0 = k/S$.  Under the random oracle null
  (i.i.d.\ uniform tokens), the indicators
  $X_i = T_{\sigma_\kappa(t_i), \sigma_\kappa(t_{i+1})}$ satisfy
  $\mathbb{E}[X_i] = p_0$ and $\mathrm{Cov}(X_i, X_j) = 0$ for all
  $i \neq j$, hence
  \begin{equation}
    \mathrm{Var}\!\left[\phi(\mathbf{t}^r)\right]
      \;=\; \frac{p_0 (1 - p_0)}{n-1}.
    \label{eq:null-variance}
  \end{equation}
\end{lemma}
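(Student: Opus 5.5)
The plan is to reduce everything to the state sequence $s_i = \sigma_\kappa(t_i)$ and then compute first and second moments of the indicators $X_i = T_{s_i, s_{i+1}}$ directly.

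\textbf{Step 1: the null state sequence.} First I would fix the null model precisely. Under the random-oracle null with i.i.d.\ uniform tokens, I take the states $s_1,\dots,s_n$ to be i.i.d.\ uniform on $[S]$. This is exactly what the phrase ``uniform on $[S]$ and independent across distinct tokens'' gives when the tokens are distinct. When a token repeats, the two occurrences share a state. I would treat this collision effect as lying outside the idealisation, as the paper does, or note that it vanishes as $V \to \infty$. I will flag this modelling choice explicitly, since the lemma is really a statement about i.i.d.\ uniform states.

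\textbf{Step 2: the mean.} Averaging $T_{ab}$ over the $S^2$ equally likely pairs $(a,b)$ gives
$\mathbb{E}[X_i] = S^{-2}\sum_{a,b} T_{ab} = S^{-2}\cdot Sk = p_0$.
This uses only that the total number of ones in $T$ is $Sk$.

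\textbf{Step 3: covariances, split into two cases.} If $|i-j| \ge 2$, then $X_i$ depends on $(s_i,s_{i+1})$ and $X_j$ depends on $(s_j,s_{j+1})$. These are disjoint sets of independent variables, so $\mathrm{Cov}(X_i,X_j)=0$ immediately.

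The adjacent case $j=i+1$ is the only real content and the place where $k$-regularity is load-bearing. Here I would sum over the shared middle state $b$:
\[
\mathbb{E}[X_i X_{i+1}] = S^{-3}\sum_{b}\Big(\sum_a T_{ab}\Big)\Big(\sum_c T_{bc}\Big) = S^{-3}\sum_b k\cdot k = k^2/S^2 = p_0^2 .
\]
The first factor is a column sum and the second is a row sum, and both equal $k$ by regularity. Hence the covariance vanishes.

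It is worth remarking why the hypothesis is needed. With only row-regularity, the adjacent expectation becomes $S^{-3} k \sum_b c_b$ with $c_b$ the column sums, which still happens to give $p_0^2$. But the conditional law of $X_{i+1}$ given $X_i$ would depend on the column structure, so I would double-check this computation carefully. The conclusion is that the lemma holds with the symmetric argument, which needs the column sums to be constant.

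\textbf{Step 4: the variance.} Each $X_i$ is Bernoulli$(p_0)$, so $\mathrm{Var}(X_i)=p_0(1-p_0)$. Expanding $\mathrm{Var}\big(\sum_{i=1}^{n-1} X_i\big)$ as the sum of variances plus twice the sum of covariances, all covariance terms are zero by Step 3. Dividing by $(n-1)^2$ then yields \eqref{eq:null-variance}.

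I expect the main obstacle to be not the algebra but Step 1: justifying that states are genuinely i.i.d.\ uniform, given token repetitions under a random oracle. I would handle it by stating the lemma conditional on the idealised i.i.d.-state null rather than proving it for the token-level model.
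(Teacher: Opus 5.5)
Your proof is correct and is essentially the paper's argument: the paper also gets the mean from the $k$ successors per state, dismisses non-adjacent pairs by disjointness, and handles the adjacent pair by conditioning on the shared middle state $s'$. There, column- and row-regularity give $\Pr[X_i=1\mid s']=\Pr[X_{i+1}=1\mid s']=p_0$, which is the same computation as your $S^{-3}\sum_b(\sum_a T_{ab})(\sum_c T_{bc})$, and token coincidences are absorbed into the random-oracle idealisation exactly as you propose.

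In your aside, trust your own calculation instead of retreating from it. The adjacent covariance equals $S^{-2}$ times the covariance of the column sum and the row sum at a uniformly random state, so row-regularity alone already makes it vanish. Moreover, two uncorrelated Bernoulli variables are independent, so the conditional law of $X_{i+1}$ given $X_i$ does not depend on the columns. This does not affect your proof, since the lemma assumes both regularities. It does show that column-regularity is not actually load-bearing for this lemma, contrary to both your conclusion and the paper's framing.
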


\begin{proof}
  Write $N^+(s) = \{s' : T(s,s')=1\}$ (equivalently $\Sigma_k(s)$
  from \autoref{sec:method}) and $N^-(s') = \{s : T(s,s')=1\}$;
  by $k$-regularity, $|N^+(s)| = |N^-(s')| = k$ for every $s, s'$.
  Throughout this proof we operate under the random-oracle null with
  i.i.d.\ uniform tokens on $\mathcal V$; the empirical FPR drift
  observed on natural-language text (\autoref{tab:cal-sstar},
  \autoref{sec:discussion}) is the practical price of this
  idealisation.

  \emph{Marginal mean.}  Conditioning on $\sigma_\kappa(t_i)$ and using
  row-regularity,
  \(
    \Pr[X_i = 1]
      = \mathbb{E}\!\left[ |N^+(\sigma_\kappa(t_i))| / S \right]
      = k/S = p_0.
  \)

  \emph{Non-adjacent pairs.}  For $|i-j| \geq 2$, $X_i$ and $X_j$
  depend on disjoint token pairs and thus on independent state
  evaluations of the random oracle (assuming the underlying tokens
  are distinct; coincidences contribute $O(1/|\mathcal{V}|)$ and are
  absorbed into the random-oracle approximation).  Hence
  $\mathrm{Cov}(X_i, X_j) = 0$.

  \emph{Adjacent pairs.}  $X_i$ and $X_{i+1}$ share the token
  $t_{i+1}$.  Conditioning on $\sigma_\kappa(t_{i+1}) = s'$, we have
  \[
    \Pr[X_i = 1 \mid \sigma_\kappa(t_{i+1}) = s']
      \;=\; |N^-(s')| / S \;=\; p_0,
  \]
  \[
    \Pr[X_{i+1} = 1 \mid \sigma_\kappa(t_{i+1}) = s']
      \;=\; |N^+(s')| / S \;=\; p_0,
  \]
  using \emph{column}-regularity for the first identity.  Conditional
  on $\sigma_\kappa(t_{i+1})$, the indicators $X_i$ and $X_{i+1}$ depend
  on the disjoint random-oracle evaluations $\sigma_\kappa(t_i)$ and
  $\sigma_\kappa(t_{i+2})$, so they are conditionally independent
  whenever $t_i \neq t_{i+2}$ (the coincidence event has
  probability $1/|\mathcal{V}|$ under iid uniform tokens and is
  absorbed into the random-oracle approximation, parallel to the
  non-adjacent case above).  Therefore
  \[
    \Pr[X_i = X_{i+1} = 1]
      \;=\; \mathbb{E}_{s'}\!\left[ p_0 \cdot p_0 \right]
      \;=\; p_0^2,
  \]
  and $\mathrm{Cov}(X_i, X_{i+1}) = 0$.

  \emph{Variance.}  $\phi = (n-1)^{-1} \sum_{i=1}^{n-1} X_i$ is the
  average of $n-1$ Bernoulli$(p_0)$ random variables with all
  pairwise covariances vanishing, giving \eqref{eq:null-variance}.
\end{proof}


\subsection{Detection Bound and Calibration}
\label{subsec:proof-detection}

\begin{theorem}[Detection bound and calibration]
  \label{thm:detection}
  Let $\mathbf{t}^w$ be a ChainMark-watermarked sequence of length $n$
  with watermark fraction $\rho$ and clockwork transition over $S$
  states; let $\mathbf{t}^r$ be an i.i.d.\ random sequence over
  $\mathcal{V}$.  Then
  \begin{align}
    \mathbb{E}\!\left[\phi(\mathbf{t}^w)\right]
      &= \tfrac{1}{S} + \rho\, \tfrac{S-1}{S}, \\
    \mathbb{E}\!\left[\phi(\mathbf{t}^r)\right] &= \tfrac{1}{S}, \\
    \mathrm{Var}\!\left[\phi(\mathbf{t}^r)\right]
      &= \tfrac{(1/S)(1-1/S)}{n-1}, \\
    z(\rho, S, n)
      &= \rho \sqrt{(S-1)(n-1)}.
  \end{align}
  Under the midpoint threshold $\tau_{\mathrm{mid}} = \tfrac{1}{S} +
  \tfrac{\rho}{2} \tfrac{S-1}{S}$, splitting the indicators into
  disjoint odd/even subsequences and applying Hoeffding gives
  \[
    \mathrm{FPR}
      \;\leq\; 2 \exp\!\bigl(-2 \lfloor (n-1)/2 \rfloor
        ((\rho/2)(S-1)/S)^2\bigr).
  \]
  Using the Gaussian convention at the midpoint threshold, the
  minimum state count guaranteeing $\mathrm{FPR} \leq \alpha$ is
  \begin{equation}
    S^\star(n, \rho, \alpha)
      \;=\; \left\lceil \frac{4 z_\alpha^2}{\rho^2 (n-1)} + 1
        \right\rceil.
    \label{eq:calibration}
  \end{equation}
  $S^\star$ is a \emph{provisioning} bound: it is derived under the
  midpoint detector $\tau_{\mathrm{mid}}$ to give the regulator a
  $2\times$ safety margin in the standardised mean gap. The
  \emph{operating} detector deployed in
  \autoref{sec:experiments} uses the tighter one-sided test
  $\{z > z_\alpha\}$, so empirical TPR at the operating threshold
  exceeds the conservative midpoint prediction whenever
  $S \geq S^\star(n,\rho,\alpha)$.
\end{theorem}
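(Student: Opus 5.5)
We take the four moment identities in order, then the Hoeffding bound, and finally invert the Gaussian midpoint condition to obtain \eqref{eq:calibration}.

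\emph{Null moments.} These are the case $k=1$ of \autoref{lem:variance}, since the clockwork matrix $\mathbf{T}^{\mathrm{clk}}$ is a permutation matrix and hence $1$-regular in rows and columns. The lemma gives $\mathbb{E}[\phi(\mathbf{t}^r)]=1/S$ and $\mathrm{Var}[\phi(\mathbf{t}^r)]=(1/S)(1-1/S)/(n-1)$ directly, so nothing new is needed there.

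\emph{Watermarked mean.} We model the gate as selecting each transition $i\to i+1$ independently of the states with probability $\rho$. At a gated step, \autoref{alg:embed} forces $\sigma_\kappa(t_{i+1})=\sigma_\kappa(t_i)+1 \bmod S$, so $X_i=1$ deterministically. This needs $p_i(\mathcal V^\star)>0$, which we assume, or else we absorb the violation into the effective $\rho$. At an ungated step, the random-oracle assumption says the freshly sampled token's state is uniform on $[S]$ and independent of the previous state, so $\Pr[X_i=1]=1/S$. This holds because the LM distribution does not depend on the hidden partition once we average over the key. Averaging over $i$ gives $\mathbb{E}[\phi(\mathbf{t}^w)]=\rho+(1-\rho)/S=1/S+\rho(S-1)/S$.

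\emph{Noncentrality.} The standardised mean gap is
\[
\frac{\rho(S-1)/S}{\sqrt{(1/S)(1-1/S)/(n-1)}}=\rho\sqrt{(S-1)(n-1)}
\]
by elementary algebra.

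\emph{Hoeffding FPR bound.} Under the null, \autoref{lem:variance} gives only zero pairwise covariance, not independence, because adjacent $X_i$ share a token. We therefore split the sum into odd-indexed and even-indexed indicators. Within each class the indicators use disjoint token pairs and are genuinely independent under the random oracle. Each class has at least $m=\lfloor (n-1)/2\rfloor$ terms. The event $\phi\ge\tau_{\mathrm{mid}}$ forces at least one class average to exceed its mean by $\epsilon=(\rho/2)(S-1)/S$, since the overall average is a convex combination of the two class averages. A union bound then combines the two one-sided Hoeffding inequalities $\exp(-2m\epsilon^2)$, giving the stated factor $2$.

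\emph{Calibration.} Using the Gaussian approximation at the midpoint, FPR $\le\alpha$ requires the half-gap to be at least $z_\alpha$ null standard deviations, that is, $\tfrac12\rho\sqrt{(S-1)(n-1)}\ge z_\alpha$. Squaring gives $S-1\ge 4z_\alpha^2/(\rho^2(n-1))$, and taking the smallest integer yields \eqref{eq:calibration}.

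\emph{TPR remark.} The threshold $z_\alpha$ lies below the midpoint exactly when $S\ge S^\star$, by the same inequality. So the operating test dominates the midpoint detector in power. Its size must be argued separately under the Gaussian null, which is not proved here.

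\emph{Main obstacle.} The delicate step is the watermarked mean. One must justify that ungated transitions are valid with probability exactly $1/S$ even though $t_i$ was itself chosen under the mask, and that the realised gate fraction equals $\rho$ in expectation. We would handle this by conditioning on the full history and invoking the random-oracle independence of $\sigma_\kappa$ on the not-yet-queried next token. The identity is then exact in expectation over the key, up to the same $O(1/|\mathcal V|)$ coincidence terms absorbed in \autoref{lem:variance}.
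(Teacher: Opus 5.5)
Your proposal is correct and follows the paper's proof essentially step for step: null moments from \autoref{lem:variance} at $k=1$, the gated/ungated split $\rho\cdot 1+(1-\rho)/S$ for the watermarked mean, the odd/even Hoeffding split with a union bound, and the midpoint Gaussian inversion. Your ``at least $m$ terms'' and rejection-region-containment remarks are slightly more careful than the paper's wording, but one caution on your closing paragraph: the paper simply posits that ungated transitions are Bernoulli$(1/S)$ as part of the random-oracle idealisation, and your proposed repair would not make this exact up to $O(1/|\mathcal V|)$ terms, because the masked argmax at gated steps leaks information about $\sigma_\kappa$ on every higher-probability token and LM text repeats tokens (and bigrams) far more often than i.i.d.\ uniform text does, so that step is best stated as a modelling assumption.
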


\begin{proof}
  \emph{Null mean.}  Under the random oracle, $\sigma_\kappa$ is uniform
  and independent across distinct tokens, so $X_i \sim
  \mathrm{Bernoulli}(1/S)$ marginally and
  $\mathbb{E}[\phi(\mathbf{t}^r)] = 1/S$.

  \emph{Watermark mean.}  At a gated position,
  \autoref{alg:embed} sets $\sigma_\kappa(t_{i+1}) = (\sigma_\kappa(t_i) + 1)
  \bmod S$, so $X_i = 1$ deterministically.  At an ungated position,
  $X_i \sim \mathrm{Bernoulli}(1/S)$.  With gate density $\rho$,
  \[
    \mathbb{E}[\phi(\mathbf{t}^w)]
      = \rho \cdot 1 + (1-\rho) \cdot \tfrac{1}{S}
      = \tfrac{1}{S} + \rho \tfrac{S-1}{S}.
  \]

  \emph{Null variance.}  Clockwork is the $k=1$ instance of
  \autoref{lem:variance} (column-regularity is trivial since
  $|N^-(s')| = 1$), giving
  $\mathrm{Var}[\phi(\mathbf{t}^r)] = (1/S)(1-1/S)/(n-1)$.

  \emph{$z$-score.}  The signed mean gap is $\rho(S-1)/S$.  Dividing
  by the null SD,
  \begin{align*}
    z(\rho,S,n)
      &= \frac{\rho(S-1)/S}{\sqrt{(1/S)(1-1/S)/(n-1)}} \\
      &= \rho \sqrt{(S-1)(n-1)}.
  \end{align*}

  \emph{FPR via Hoeffding.}  Adjacent indicators share a token and
  are not jointly i.i.d., so we split into $X_1, X_3, X_5, \dots$ and
  $X_2, X_4, X_6, \dots$, two disjoint subsequences of $m =
  \lfloor (n-1)/2 \rfloor$ terms, each i.i.d.\
  $\mathrm{Bernoulli}(1/S)$. Since $\phi$ is a (size-weighted) convex
  combination of these
  two subsequence means, $\phi \geq \tau$ implies that at least one
  subsequence mean is $\geq \tau$, so by Hoeffding plus a union
  bound, for any $\tau > 1/S$,
  \(
    \Pr[\phi \geq \tau] \leq 2 \exp(-2 m (\tau - 1/S)^2).
  \)


  \emph{Calibration.}  By the CLT for 1-dependent sequences (the
  $\{X_i\}$ are 1-dependent and bounded with vanishing pairwise
  covariance), $\sqrt{n-1}\,(\phi - 1/S)$ is asymptotically
  $\mathcal{N}(0, p_0(1-p_0))$ under the null, with $p_0 = 1/S$.  At
  the midpoint threshold $\tau_{\mathrm{mid}} = 1/S +
  (\rho/2)(S-1)/S$, the null $z$-margin is
  $(\rho/2) \sqrt{(S-1)(n-1)}$.  Requiring it to exceed $z_\alpha$,
  \[
    \tfrac{\rho}{2} \sqrt{(S-1)(n-1)} \;\geq\; z_\alpha
    \quad\Longleftrightarrow\quad
    S \;\geq\; \frac{4 z_\alpha^2}{\rho^2 (n-1)} + 1,
  \]
  and taking the ceiling yields \eqref{eq:calibration}.
\end{proof}

\begin{proposition}[Empirical-SD recalibration recipe]
  \label{thm:detection-recal}
  Let $\mathbf{t}^{(1)}, \dots, \mathbf{t}^{(M)}$ be $M$
  i.i.d.\ non-watermarked sequences from the deployed LM, and let
  $z_1, \dots, z_M$ be their detection $z$-statistics under ChainMark with
  $(S, \rho)$ fixed.  Define
  \begin{align*}
    \hat\mu &= \tfrac{1}{M}\sum_{i=1}^M z_i, \\
    \hat\sigma^2 &= \tfrac{1}{M-1}\sum_{i=1}^M (z_i - \hat\mu)^2, \\
    z^\star &= \hat\mu + z_\alpha\,\hat\sigma.
  \end{align*}
  If the $z_i$ are approximately Gaussian with mean $\hat\mu$ and SD
  $\hat\sigma$ (a calibration-side assumption that holds far better
  than the i.i.d.\ random-oracle null assumed by
  \autoref{thm:detection}), then declaring $H_1$ when $z > z^\star$
  has FPR $\to \alpha$ as $M\to\infty$. When the $z$-null is
  measurably non-Gaussian (e.g.\ heavy-tailed at large $S$), the
  Gaussian plug-in $z^\star$ is replaced by the
  \emph{empirical-quantile} drop-in
  $z^\star_{\mathrm{emp}} = \hat F^{-1}(1{-}\alpha)$, where
  $\hat F$ is the empirical CDF of $\{z_i\}$; this delivers the
  target FPR at any $\alpha$ by Glivenko--Cantelli regardless of
  null shape, with the
  Dvoretzky--Kiefer--Wolfowitz inequality giving
  $\sup_x\!|\hat F(x) - F(x)| = O_p(M^{-1/2})$ Monte-Carlo noise in
  the threshold.
\end{proposition}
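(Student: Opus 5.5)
We split the argument along the two recipes in \autoref{thm:detection-recal}. The common setup is a fresh non-watermarked text $\mathbf{t}$ from the deployed LM, whose statistic $Z = z(\mathbf{t})$ is independent of the calibration sample $z_1,\dots,z_M$ and has the same continuous CDF $F$ (each $z_i$ is a fixed measurable function of an i.i.d.\ draw from the same LM). The FPR of a data-dependent threshold $\hat z$ is then $\Pr[Z > \hat z] = \mathbb{E}[1 - F(\hat z)]$. Conditioning on the calibration sample reduces the problem to showing $1 - F(\hat z) \to \alpha$, followed by dominated convergence, since $0 \le 1-F \le 1$.

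\emph{Gaussian plug-in.} Interpret the hypothesis as $F = \Phi((\cdot - \mu)/\sigma)$ with finite $\mu,\sigma>0$, so that $\hat\mu,\hat\sigma$ estimate these population quantities. By the strong law of large numbers, $\hat\mu \to \mu$ almost surely. The same holds for $\hat\sigma^2 \to \sigma^2$, because the second moment is finite under the Gaussian model. Then $z^\star = \hat\mu + z_\alpha \hat\sigma \to \mu + z_\alpha\sigma$ almost surely. Since $F$ is continuous, $1 - F(z^\star) \to 1 - \Phi(z_\alpha) = \alpha$ almost surely, and dominated convergence gives $\mathrm{FPR}\to\alpha$. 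We also note the approximate version: if $\sup_x |F(x) - \Phi((x-\mu)/\sigma)| \le \varepsilon$, the same argument gives a limiting FPR within $\varepsilon$ of $\alpha$. This is how we would formalise ``approximately Gaussian''.

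\emph{Empirical quantile.} Here we assume only that $F$ is continuous and strictly increasing at $q = F^{-1}(1-\alpha)$. By Glivenko--Cantelli, $\sup_x|\hat F(x) - F(x)| \to 0$ almost surely, and the standard quantile-consistency argument then gives $\hat F^{-1}(1-\alpha) \to q$ almost surely. Concretely, for any $\eta>0$ we have $F(q-\eta) < 1-\alpha < F(q+\eta)$, and uniform convergence eventually yields the same inequalities for $\hat F$. Continuity of $F$ then gives $1 - F(z^\star_{\mathrm{emp}}) \to \alpha$, and dominated convergence finishes as before. For the rate, the Dvoretzky--Kiefer--Wolfowitz inequality $\Pr[\sup_x|\hat F - F| > \epsilon] \le 2e^{-2M\epsilon^2}$ gives $\sup_x|\hat F - F| = O_p(M^{-1/2})$. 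We then observe that $F(\hat F^{-1}(1-\alpha))$ lies within $\sup_x|\hat F - F| + 1/M$ of $1-\alpha$, where the $1/M$ comes from the jump size of $\hat F$. This bounds the FPR deviation by $O_p(M^{-1/2})$ directly, without needing a density at $q$.

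\emph{Main obstacle.} The delicate point is the modelling, not the limits. The statement says ``approximately Gaussian with mean $\hat\mu$ and SD $\hat\sigma$'', and read literally this is circular, since $\hat\mu$ and $\hat\sigma$ are random. We would restate it in population terms as above. We would also make the independence of the test text from the calibration corpus explicit; otherwise $\mathbb{E}[1-F(\hat z)]$ is not the FPR. The rest is routine consistency of moment and quantile estimators.
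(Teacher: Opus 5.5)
Your argument is correct. For the Gaussian plug-in it is the paper's argument in slightly different form. The paper standardises, noting that $(z-\hat\mu)/\hat\sigma$ is asymptotically $\mathcal{N}(0,1)$ by consistency of $\hat\mu,\hat\sigma$ plus Slutsky, and reads off $\Pr[z>\hat\mu+z_\alpha\hat\sigma]\to\alpha$. You instead show $z^\star\to\mu+z_\alpha\sigma$ almost surely and push this through $F$ with dominated convergence.

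The two routes buy different things. The Slutsky limit is shorter, and it holds even without the test text being independent of the calibration corpus. Your conditioning identity $\mathrm{FPR}=\mathbb{E}[1-F(\hat z)]$ does need that independence, but it is what turns DKW into an actual bound on the FPR error rather than only on $\hat F$.

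The paper's proof stops after the Gaussian case; its closing remark about TPR lies outside the statement. It leaves the empirical-quantile claim to the Glivenko--Cantelli and DKW citations inside the statement. Your quantile-consistency step and the $\sup_x|\hat F-F|+1/M$ estimate are therefore added content, not a competing route.

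One caveat deserves emphasis, because your continuity hypothesis on $F$ is not cosmetic. It fails for the actual ChainMark statistic: at fixed $n$ one has $\phi=c/(n-1)$, so $z$ is lattice-valued and $F$ is purely atomic. If $1-\alpha$ falls strictly inside a jump of $F$ at an atom $a$, then $\hat F^{-1}(1-\alpha)\to a$ and the FPR converges to $1-F(a)<\alpha$, not to $\alpha$. So the statement's ``target FPR regardless of null shape'' holds only in one-sided form.

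That one-sided form follows from half of your own estimate, with no continuity at all. Writing $\hat q=\hat F^{-1}(1-\alpha)$, one has $F(\hat q)\ge\hat F(\hat q)-\sup_x|\hat F-F|\ge 1-\alpha-\sup_x|\hat F-F|$. Hence $\mathrm{FPR}\le\alpha+\mathbb{E}\sup_x|\hat F-F|=\alpha+O(M^{-1/2})$.

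Similarly, in your approximate-Gaussian remark, $\mu$ and $\sigma$ must be the actual mean and SD of $F$, since those are the limits of $\hat\mu,\hat\sigma$. They cannot be just the parameters of some Gaussian that is Kolmogorov-close to $F$, because Kolmogorov closeness does not control moments.
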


\begin{proof}
  Standardise: $(z - \hat\mu)/\hat\sigma$ is asymptotically
  $\mathcal{N}(0,1)$ as $M\to\infty$ by the standard plug-in
  argument (consistency of $\hat\mu, \hat\sigma$ + Slutsky), so
  $\Pr[z > \hat\mu + z_\alpha\hat\sigma] \to \Pr[\mathcal{N}(0,1) >
  z_\alpha] = \alpha$.  The recipe leaves the watermarked side
  untouched: the watermarked $z$-distribution still concentrates
  around $\rho\sqrt{(S-1)(n-1)}$ (\autoref{thm:detection}), which on
  natural-language text empirically exceeds $z^\star$ by an order of
  magnitude (\autoref{tab:fpr-zero}), so TPR remains at $100\%$.
\end{proof}


\subsection{Universal Robustness Threshold}
\label{subsec:proof-robustness}

\begin{theorem}[Universal robustness threshold]
  \label{thm:robustness}
  Suppose an \emph{oracle-blind} adversary (without access to
  $\sigma_\kappa$ or the random oracle) independently replaces each
  token with probability $\delta$ by a fresh token whose state
  $\sigma_\kappa$ is uniform on $[S]$ (so the substitutions are
  i.i.d.\ uniform in the partition; structured paraphrase / NLLB
  back-translation produces correlated edits and is treated
  empirically in \autoref{subsec:exp3-translation}).  Standardise the post-attack
  $z$ by the \emph{null} standard deviation
  $\sqrt{p_0(1-p_0)/(n-1)}$ (the same scaling used pre-attack).
  Then for clockwork ChainMark with state count $S$ and gate density
  $\rho$,
  \begin{equation}
    \mathbb{E}[\phi \mid \mathrm{attack}]
      \;=\; \tfrac{1}{S} + \rho (1-\delta)^2 \tfrac{S-1}{S}.
    \label{eq:post-attack-mean}
  \end{equation}
  Define the pre-attack and post-attack mean-gap signals
  $g_{\mathrm{pre}} = \rho (S{-}1)/S$ and
  $g_{\mathrm{post}} = \rho(1{-}\delta)^2 (S{-}1)/S$.  Both share
  the same null SD, so the standardised-margin ratio is
  \begin{equation}
    \frac{z_{\mathrm{post}}}{z_{\mathrm{pre}}}
      \;=\; \frac{g_{\mathrm{post}}}{g_{\mathrm{pre}}}
      \;=\; (1-\delta)^2,
    \label{eq:z-ratio}
  \end{equation}
  independent of $(S, \rho, n)$.  Under the midpoint-threshold
  detection convention the critical edit fraction at which detection
  fails is
  \begin{equation}
    \delta^\star \;=\; 1 - 1/\sqrt{2} \;\approx\; 0.293,
  \end{equation}
  again independent of $(S, \rho, n)$.
\end{theorem}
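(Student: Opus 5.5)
The proof splits into four steps: the post-attack mean, the standardised ratio, the critical edit fraction, and the $k$-regular remark.

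\emph{Post-attack mean.} I will work one transition at a time. Fix a position $i$ and write $X_i'$ for the validity indicator after the attack. Let $A_i$ be the event that neither $t_i$ nor $t_{i+1}$ is replaced. Substitutions are independent with rate $\delta$, so $\Pr[A_i] = (1-\delta)^2$.

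On $A_i$ the pair is untouched, and $X_i' = X_i$. Its conditional mean is $\rho\cdot 1 + (1-\rho)/S$, by the watermark-mean computation in the proof of \autoref{thm:detection}.

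On the complement $A_i^c$, at least one endpoint is a fresh token whose state is uniform on $[S]$. That state is independent of the other endpoint's state; this is where the oracle-blindness assumption enters. Suppose the successor was replaced: then $\Pr[\text{new state} = \text{succ of the other}] = 1/S$ by uniformity. Suppose instead the predecessor was replaced: then column-regularity (here $|N^-(s')|=1$, as in \autoref{lem:variance}) again gives $1/S$. So $\mathbb{E}[X_i' \mid A_i^c] = 1/S$.

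Combining the two cases,
\[
\mathbb{E}[X_i'] = (1-\delta)^2\Bigl(\tfrac1S + \rho\tfrac{S-1}{S}\Bigr) + \bigl(1-(1-\delta)^2\bigr)\tfrac1S,
\]
which simplifies to \eqref{eq:post-attack-mean}. Averaging over $i$ finishes this step. The gate indicators are independent of the attack, so $\rho$ factors out.

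\emph{Ratio.} Both signals are standardised by the same null SD $\sqrt{p_0(1-p_0)/(n-1)}$, taken from \autoref{lem:variance} as the statement prescribes. The SD therefore cancels, and $z_{\mathrm{post}}/z_{\mathrm{pre}} = g_{\mathrm{post}}/g_{\mathrm{pre}} = (1-\delta)^2$. Every dependence on $S$, $\rho$ and $n$ sits in the common factor $\rho(S-1)/S\cdot\sqrt{(n-1)/(p_0(1-p_0))}$, which cancels as well.

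\emph{Critical fraction.} The obstacle is not computational. It is fixing what ``detection fails'' means under the midpoint convention so that the constant actually comes out as $1/2$. I read the convention as follows: the detector's threshold sits at the midpoint $\tau_{\mathrm{mid}}$, i.e.\ at standardised margin $z_{\mathrm{pre}}/2$ above the null. Detection fails once the expected post-attack signal drops to that threshold, that is, once $g_{\mathrm{post}} \le g_{\mathrm{pre}}/2$. Equivalently, the post-attack alternative mean falls to $\tau_{\mathrm{mid}}$, where the midpoint detector loses its margin on the alternative side.

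By the ratio identity this is $(1-\delta)^2 = 1/2$, giving $\delta^\star = 1 - 1/\sqrt2$. This is invariant in $(S,\rho,n)$ because the threshold and the signal scale by the same affine factor $\rho(S-1)/S$.

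For the $k$-regular remark, replacing $1/S$ by $p_0=k/S$ throughout leaves every step unchanged. Row- and column-regularity give the $p_0$ value for a replaced endpoint in both directions, so the same constant results.
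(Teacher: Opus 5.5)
Your proposal is correct and follows essentially the same route as the paper, which splits on $(M_i,M_{i+1})\in\{0,1\}^2$ with weights $(1-\delta)^2,\ \delta(1-\delta),\ \delta(1-\delta),\ \delta^2$; you merely regroup these cases as $A_i$ versus $A_i^c$, and your ratio and midpoint steps (post-attack mean falling to $\tau_{\mathrm{mid}}$, i.e.\ $(1-\delta)^2<1/2$) match the paper's. One small point in your favour: you condition on the surviving endpoint, so a replaced successor uses out-degree and a replaced predecessor uses in-degree, which is slightly cleaner bookkeeping than the paper's (it labels these the other way round), though for clockwork both degrees equal $1$ and nothing hinges on it.
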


\begin{proof}
  Let $M_i \sim \mathrm{Bernoulli}(\delta)$, i.i.d.\ across $i$,
  indicate that $t_i$ has been replaced; modified tokens receive
  fresh uniform state by the random-oracle property.  For each pair
  $(t_i, t_{i+1})$, decompose on $(M_i, M_{i+1})$:

  \begin{itemize}[nosep, leftmargin=1.2em]
    \item $(0,0)$: both tokens survive.  The indicator equals the
      pre-attack distribution, with mean $1/S + \rho (S{-}1)/S$.
      Weight $(1-\delta)^2$.
    \item $(0,1)$: $t_{i+1}$ fresh.  By column-regularity (which is
      trivial for clockwork), $\mathbb{E}[X_i] = 1/S$.  Weight
      $(1-\delta)\delta$.
    \item $(1,0)$: symmetric, by row-regularity.  Mean $1/S$, weight
      $\delta(1-\delta)$.
    \item $(1,1)$: both fresh.  Mean $1/S$, weight $\delta^2$.
  \end{itemize}
  Combining,
  \begin{align*}
    \mathbb{E}[\phi \mid \mathrm{atk}]
      &= (1-\delta)^2\!\left[\tfrac{1}{S} + \rho \tfrac{S-1}{S}\right]
        + (1 - (1-\delta)^2) \tfrac{1}{S} \\
      &= \tfrac{1}{S} + \rho (1-\delta)^2 \tfrac{S-1}{S},
  \end{align*}
  proving \eqref{eq:post-attack-mean}.


  \emph{$z$-ratio independence.}  The mean gap above the null
  baseline $1/S$ is $g = \rho(S{-}1)/S$ pre-attack and $g
  (1-\delta)^2$ post-attack. We standardise both regimes by the
  \emph{null} SD $\sqrt{p_0(1-p_0)/(n-1)}$ (the conventional
  one-sample $z$-test scaling, used as a fixed denominator across
  attacks); this is a definition, not an assumption that the
  post-attack alternative variance is null-bounded.  Hence
  $z_{\mathrm{post}}/z_{\mathrm{pre}} = g_{\mathrm{post}} /
  g_{\mathrm{pre}} = (1-\delta)^2$, independent of $(S, \rho, n)$.

  \emph{Critical fraction.}  At the midpoint threshold
  $\tau_{\mathrm{mid}} = 1/S + (\rho/2)(S{-}1)/S$, detection is
  defeated iff the post-attack mean drops below $\tau_{\mathrm{mid}}$,
  i.e.\
  \[
    \rho (1-\delta)^2 \tfrac{S-1}{S}
      \;<\; \tfrac{\rho}{2} \tfrac{S-1}{S}
    \quad\Longleftrightarrow\quad
    (1-\delta)^2 \;<\; \tfrac{1}{2}.
  \]
  Solving, $\delta^\star = 1 - 1/\sqrt{2}$.  The factors $(S{-}1)/S$
  and $\rho$ cancel on both sides; this is the structural reason for
  the universal threshold.
\end{proof}

\subsection{$k$-Regular Generalisation}
\label{subsec:proof-kregular}

We now lift the clockwork results to any $k$-regular adjacency.  We
recall the definition for the appendix.

\begin{definition}[$k$-regular topology]
  \label{def:kregular}
  $T \in \{0,1\}^{S \times S}$ is \emph{$k$-regular} ($1 \leq k \leq
  S{-}1$) if every row and every column contains exactly $k$ ones.
  Equivalently, $T/k$ is doubly stochastic.  The induced detection
  baseline is $p_0 = k/S$.
\end{definition}

\begin{theorem}[$k$-regular calibration identities]
  \label{thm:k-regular}%
  \label{prop:kregular}
  Fix any $k$-regular adjacency $T$ and let $p_0 = k/S$.  Apply
  \autoref{alg:embed} and \autoref{alg:detect} with $T$ in place of
  the clockwork matrix.  Then for an i.i.d.\ random null sequence
  $\mathbf{t}^r$, a watermarked sequence $\mathbf{t}^w$ with gate
  density $\rho$, and an oracle-blind adversary with per-token
  replacement rate $\delta$ (as in \autoref{thm:robustness}),
  \begin{align}
    \mathbb{E}[\phi(\mathbf{t}^r)] &= p_0,
      \label{eq:kreg-null}\\
    \mathbb{E}[\phi(\mathbf{t}^w)] &= p_0 + \rho (1-p_0),
      \label{eq:kreg-wm}\\
    \mathrm{Var}[\phi(\mathbf{t}^r)]
      &= \tfrac{p_0 (1-p_0)}{n-1},
      \label{eq:kreg-var}\\
    z(\rho, p_0, n)
      &= \rho \sqrt{\tfrac{(1-p_0)(n-1)}{p_0}},
      \label{eq:kreg-z}\\
    \mathbb{E}[\phi \mid \mathrm{atk}]
      &= p_0 + \rho (1-\delta)^2 (1 - p_0).
      \label{eq:kreg-atk}
  \end{align}
\end{theorem}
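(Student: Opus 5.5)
The plan is to treat \autoref{thm:k-regular} as a line-by-line lift of the clockwork arguments in \autoref{thm:detection} and \autoref{thm:robustness}, with $1/S$ replaced by $p_0 = k/S$ throughout. Every place where those proofs used a clockwork property, I will substitute the corresponding row- or column-regularity of $T$. The five identities are handled in order; each reduces to a short conditioning argument on the random-oracle states.

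\emph{Null mean and variance.} Both \eqref{eq:kreg-null} and \eqref{eq:kreg-var} are exactly \autoref{lem:variance}, which was already stated for arbitrary $k$-regular $T$. It gives $\mathbb{E}[X_i]=p_0$ and vanishing pairwise covariances, so these two lines are invoked directly.

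\emph{Watermarked mean.} At a gated position, \autoref{alg:embed} restricts $t_{i+1}$ to $\bigcup_{s'\in\Sigma_k(s_i)}\mathcal V_{s'}$. Hence $\sigma_\kappa(t_{i+1})\in\Sigma_k(\sigma_\kappa(t_i))$ and $X_i=1$ deterministically, provided the guard $p_i(\mathcal V^\star)>0$ holds, which I will assume as in the clockwork proof. At an ungated position the next token is free-sampled. Under the random-oracle idealisation its state is uniform and independent of $\sigma_\kappa(t_i)$, so row-regularity gives $\Pr[X_i=1]=k/S=p_0$. Averaging with gate density $\rho$ yields $\rho+(1-\rho)p_0=p_0+\rho(1-p_0)$, which is \eqref{eq:kreg-wm}.

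\emph{Noncentrality.} Dividing the gap $\rho(1-p_0)$ by the null SD $\sqrt{p_0(1-p_0)/(n-1)}$ gives \eqref{eq:kreg-z}. As a sanity check, this reduces to $\rho\sqrt{(S-1)(n-1)}$ at $k=1$.

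\emph{Post-attack mean.} I will repeat the four-case decomposition on $(M_i,M_{i+1})$ from \autoref{thm:robustness}.
\begin{itemize}[nosep, leftmargin=1.2em]
  \item $(0,0)$, weight $(1-\delta)^2$: both tokens survive, so the pair keeps its pre-attack mean $p_0+\rho(1-p_0)$.
  \item $(0,1)$: $t_{i+1}$ is replaced by a fresh token with uniform state independent of $\sigma_\kappa(t_i)$. Row-regularity gives mean $p_0$.
  \item $(1,0)$: $t_i$ is fresh while $t_{i+1}$ is fixed. Column-regularity gives $|N^-(s')|/S=p_0$. This is the step where genuine $k$-regularity, not just out-degree $k$, is load-bearing.
  \item $(1,1)$: both states are fresh and independent, so the mean is $p_0$.
\end{itemize}
Summing the cases gives $(1-\delta)^2[p_0+\rho(1-p_0)]+(1-(1-\delta)^2)p_0$, which simplifies to \eqref{eq:kreg-atk}.

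The main obstacle I expect is justifying the independence claims rather than the algebra. In the watermarked and $(0,1)$ cases, a free-sampled LM token's state is treated as uniform and independent of the previous state. That holds only under the random-oracle idealisation, with the LM distribution independent of the key, so the argument must take care to invoke the same idealisation as \autoref{thm:detection} and nothing more. The gate-density step also uses $\mathbb{E}[g_i]=\rho$ in expectation, not a realised fraction. I would state both conventions explicitly and, as in the earlier proofs, absorb token-coincidence events ($t_i=t_{i+2}$) into the random-oracle approximation. Finally, because each gap carries the same factor $(1-p_0)$, the midpoint critical fraction $\delta^\star=1-1/\sqrt2$ follows verbatim from the cancellation in \autoref{thm:robustness}.
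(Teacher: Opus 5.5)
Your proposal follows the paper's proof essentially step for step: the null-variance lemma for the null mean and variance, the gated/ungated split for the watermarked mean, division by the null SD for the noncentrality, and the same four-case decomposition on $(M_i,M_{i+1})$ for the post-attack mean. The one difference is that you attribute row-regularity to the $(0,1)$ case (surviving $t_i$, fresh $t_{i+1}$) and column-regularity to the $(1,0)$ case, while the paper states the reverse; your assignment is the correct one, because conditioning on the surviving endpoint's state makes the hit probability $|N^+(s)|/S$ (resp.\ $|N^-(s')|/S$), and the final identity is unaffected since a $k$-regular $T$ has both properties.
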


\begin{proof}
  Equation \eqref{eq:kreg-null} and the variance
  \eqref{eq:kreg-var} are immediate from \autoref{lem:variance}.

  \emph{Watermark mean.}  At a gated position,
  \autoref{alg:embed} restricts the argmax to $\bigcup_{s' \in
    N^+(s)} \mathcal{V}_{s'}$, so $\sigma_\kappa(t_{i+1}) \in
  N^+(\sigma_\kappa(t_i))$, i.e., $X_i = 1$.  At an ungated position, $X_i
  \sim \mathrm{Bernoulli}(p_0)$.  Averaging,
  $\mathbb{E}[\phi(\mathbf{t}^w)] = \rho + (1-\rho) p_0 = p_0 +
  \rho (1-p_0)$.

  \emph{$z$-score.}  Dividing the signed mean gap $\rho(1-p_0)$ by
  the null SD $\sqrt{p_0(1-p_0)/(n-1)}$ gives
  $\rho \sqrt{(1-p_0)(n-1)/p_0}$.  At $k=1$, $p_0 = 1/S$ and this
  reduces to $\rho \sqrt{(S{-}1)(n{-}1)}$, recovering
  \autoref{thm:detection}.

  \emph{Post-attack mean.}  Repeat the four-case decomposition of
  \autoref{thm:robustness}.  The $(0,1)$ case requires column-regularity
  to give $\mathbb{E}[X_i \mid t_{i+1}\text{ fresh}] = p_0$
  (the conditional in-degree $|N^-(s')|/S$ must be $p_0$ uniformly in
  $s'$); the $(1,0)$ case symmetrically uses row-regularity.  Hence
  \begin{align*}
    \mathbb{E}[\phi \mid \mathrm{atk}]
      &= (1-\delta)^2 (p_0 + \rho(1-p_0)) + (1 - (1-\delta)^2) p_0 \\
      &= p_0 + \rho (1-\delta)^2 (1 - p_0).
  \end{align*}
\end{proof}


\begin{corollary}[Universal midpoint $\delta^\star$]
  \label{cor:universal-delta}
  Under the midpoint-threshold detection convention, every
  $k$-regular ChainMark scheme with $1 \leq k < S$ (so $p_0 < 1$;
  the degenerate case $k=S$ has every transition valid and is
  trivially undetectable) has critical edit fraction
  \[
    \delta^\star \;=\; 1 - 1/\sqrt{2},
  \]
  independent of the topology parameters $(k, S, \rho, n)$.
\end{corollary}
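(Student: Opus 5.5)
The plan is to read the corollary directly off \autoref{thm:k-regular}, running the same midpoint argument as in the proof of \autoref{thm:robustness} with $1/S$ replaced by $p_0 = k/S$.

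First, take the midpoint threshold between the null and clean-watermark means. By \eqref{eq:kreg-null} and \eqref{eq:kreg-wm},
\[
  \tau_{\mathrm{mid}} = p_0 + \tfrac{\rho}{2}(1-p_0).
\]
Second, invoke the post-attack mean \eqref{eq:kreg-atk}. This is the only place where the topology enters nontrivially. The four-case decomposition there already uses column-regularity in the $(0,1)$ case and row-regularity in the $(1,0)$ case, so that every pair touched by a fresh token has mean exactly $p_0$. I will take that result as given rather than re-derive it.

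Third, as in \autoref{thm:robustness}, detection is defeated iff the post-attack mean falls below $\tau_{\mathrm{mid}}$:
\[
  p_0 + \rho(1-\delta)^2(1-p_0) < p_0 + \tfrac{\rho}{2}(1-p_0).
\]
Subtracting $p_0$ and dividing by $\rho(1-p_0)$ gives $(1-\delta)^2 < 1/2$. The division is legitimate because $\rho > 0$, since a nonzero budget is needed for any signal, and $1-p_0 > 0$ exactly when $k < S$; the latter is the only role of the hypothesis $k<S$. Solving on $\delta \in [0,1)$ gives $\delta^\star = 1 - 1/\sqrt{2}$. The factors $k$, $S$ and $\rho$ have all cancelled, and $n$ never appeared because the convention compares means, so the threshold is independent of $(k,S,\rho,n)$.

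I should also check the boundary conventions. At $\rho = 0$ there is no watermark and the statement is vacuous. At $k = S$ the gap $1-p_0$ is zero, which is the degenerate case the statement excludes. For a standardized reading, the same cancellation follows from the $z$-ratio $(1-\delta)^2$ obtained by dividing the gaps by the common null SD $\sqrt{p_0(1-p_0)/(n-1)}$.

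The main obstacle is essentially nil once \eqref{eq:kreg-atk} is granted. The substantive content, namely that regularity makes every attacked pair revert exactly to the baseline $p_0$ regardless of which state survives, sits inside \autoref{thm:k-regular}. Writing the proof is mainly a matter of stating the midpoint convention precisely and noting the nondegeneracy $\rho(1-p_0) > 0$.
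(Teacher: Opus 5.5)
Your proposal is correct and follows the paper's proof essentially line for line. You set $\tau_{\mathrm{mid}} = p_0 + (\rho/2)(1-p_0)$, plug in the post-attack mean \eqref{eq:kreg-atk} from \autoref{thm:k-regular}, and cancel $\rho(1-p_0)$ to reach $(1-\delta)^2 < 1/2$. Your explicit note that this cancellation needs $\rho > 0$ as well as $k < S$ is slightly more careful than the paper, which mentions only $p_0 < 1$.
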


\begin{proof}
  The midpoint threshold is $\tau_{\mathrm{mid}} = p_0 +
  (\rho/2)(1 - p_0)$.  By \eqref{eq:kreg-atk}, the post-attack mean
  drops below $\tau_{\mathrm{mid}}$ iff
  \[
    \rho (1-\delta)^2 (1 - p_0) \;<\; \tfrac{\rho}{2} (1 - p_0)
    \;\Longleftrightarrow\; (1-\delta)^2 \;<\; \tfrac{1}{2},
  \]
  which is independent of $(k, S, \rho, n)$ once $p_0 < 1$ (i.e.,
  $k < S$).  Solving gives $\delta^\star = 1 - 1/\sqrt{2}$.
\end{proof}

\subsection{Security Against an Oracle-Blind Adversary}
\label{subsec:proof-security}

\begin{theorem}[Pseudorandomness under random oracle, oracle-blind
    adversary]
  \label{thm:security}
  Assume the secret key $\kappa$ has min-entropy at least $\lambda$ and
  $\mathcal{H}$ is modelled as a random oracle.  Consider a
  polynomial-time adversary $\mathcal{A}$ \emph{without} access to
  the random oracle, \emph{without} access to the LM (or its
  per-position distribution), and \emph{without} access to a
  reference text drawn from the same prompt distribution.  $\mathcal{A}$
  has only the watermarked output $\mathbf{t}^w$ and the public
  scheme parameters $(S, T)$.  Then $\mathcal{A}$ achieves:
  \begin{enumerate}[label=(\roman*), leftmargin=1.6em, nosep]
    \item key-recovery advantage at most $q \cdot 2^{-\lambda}$ for a
      query budget $q$ to any auxiliary oracle that depends on $\kappa$;
    \item state-prediction success at most $1/S +
      \mathrm{negl}(\lambda)$ on each fresh token $t^\star$ that has
      not been queried with the correct key, hence advantage
      $\mathrm{negl}(\lambda)$ over the uniform $1/S$ baseline;
    \item statistical distinguishing advantage at most
      $\mathrm{negl}(\lambda)$ between $\mathbf{t}^w$ and a
      non-watermarked LM sample $\mathbf{t}^{LM}$ from the same
      prompt distribution, restricted to statistics that are
      measurable in the partition structure $\sigma_\kappa$ (i.e.,
      to tests that look at state-transition patterns rather than
      raw token-id statistics).
  \end{enumerate}
\end{theorem}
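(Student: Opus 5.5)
The proof is a standard random-oracle reduction organised around one \emph{bad event} $\mathsf{Q}$: some party (the adversary, or any auxiliary oracle it can call) evaluates $\mathcal{H}$ at a string $\kappa\,\Vert\,t$ carrying the correct key. Conditioned on $\neg\mathsf{Q}$, every state $\sigma_\kappa(t)$ the adversary has not learned is a fresh uniform element of $[S]$, independent of its entire view. The plan is to bound $\Pr[\mathsf{Q}]$ once and then read off (i)--(iii) as consequences of this fresh-uniformity.

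\emph{Step 1 (key recovery, item (i)).} The adversary has no direct random-oracle access. Its only handle on $\kappa$ is a budget of $q$ queries to auxiliary oracles depending on $\kappa$, plus the text $\mathbf{t}^w$. Each query can at best test one candidate key. Since $\kappa$ has min-entropy $\lambda$, any fixed candidate equals $\kappa$ with probability at most $2^{-\lambda}$. A union bound over the $q$ candidates gives $\Pr[\mathsf{Q}] \le q\,2^{-\lambda}$, which is item (i).

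Before this bound is valid, one must check that $\mathbf{t}^w$ itself does not shift the posterior on $\kappa$ in a usable way. Conditioning on $\mathbf{t}^w$ does reveal some information, since the text constrains the states of the tokens it contains. However, it only constrains values of $\mathcal{H}$ on points $\kappa\,\Vert\,t$. Because the adversary cannot evaluate $\mathcal{H}$, it cannot test any candidate key against these constraints. I will formalise this with a hybrid: replace $\mathcal{H}(\kappa\,\Vert\,\cdot)$ by an independent uniform function $F$. The adversary's view in the hybrid is independent of $\kappa$. The real and hybrid worlds differ only on $\mathsf{Q}$.

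\emph{Step 2 (state prediction, item (ii)).} In the hybrid, consider a fresh token $t^\star$ that is not among the queried points. Its state $F(t^\star)$ is uniform on $[S]$ and independent of everything the adversary sees, including $\mathbf{t}^w$. This holds because the LM-driven generation only ever reads $F$ on the tokens it outputs and on the candidates in the masked sets. Hence the best prediction succeeds with probability exactly $1/S$, and item (ii) follows by adding the hybrid gap $\Pr[\mathsf{Q}] = \mathrm{negl}(\lambda)$.

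\emph{Step 3 (distinguishing, item (iii)), the main obstacle.} A test measurable in the partition structure is a function of the states, and hence of $F$, which the adversary cannot evaluate. In the hybrid, without $F$ and without the LM, such a test reduces to a function of the raw token ids. The restriction in (iii) excludes exactly such tests.

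The delicate point is that for a fixed key the watermarked text \emph{does} walk the cycle, so the joint law of $(\mathbf{t}^w,\sigma_\kappa)$ differs from that of $(\mathbf{t}^{LM},\sigma_\kappa)$. The claim can only hold for statistics the adversary can compute, namely statistics that see $\sigma_\kappa$ only through oracle access the adversary lacks. I would therefore state (iii) explicitly as: any adversary whose test depends on the partition only through oracle calls succeeds with advantage at most $\Pr[\mathsf{Q}]$. This follows because, absent those calls, its output is a function of the token ids alone, which the restriction excludes. I expect most of the care to go into phrasing this restriction so that it is not vacuous, while keeping the hybrid argument rigorous.
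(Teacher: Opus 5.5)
Your plan is sound once the two points in the second paragraph are tightened, and it takes a different route from the paper's.

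\textbf{How the routes differ.} The paper argues as follows:
\begin{itemize}
\item[(i)] by a bare guessing bound;
\item[(ii)] by asserting that a better-than-$1/S$ state predictor would ``distinguish the random-oracle output from uniform, contradicting (i)'';
\item[(iii)] by converting a $\sigma_\kappa$-measurable distinguisher $\mathcal{D}(\sigma_\kappa(\mathbf{t}))$ into a next-state predictor via a hybrid argument that contradicts (ii).
\end{itemize}
You instead isolate one bad event $\mathsf{Q}$ and pass to a hybrid with an independent $F$. You then read off (i)--(ii) from $\Pr[\mathsf{Q}]\le q\,2^{-\lambda}$ and fresh uniformity. This makes explicit why $\mathbf{t}^w$ leaves the posterior on $\kappa$ unchanged, which the paper simply asserts.

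On (iii) your diagnosis is right, and the paper's route does not deliver what it claims. First, $\phi$ is itself $\sigma_\kappa$-measurable and separates $\mathbf{t}^w$ from $\mathbf{t}^{LM}$ with advantage close to $1$, since it is the detector. Second, the Yao-style hybrid yields a predictor of $\sigma_\kappa(t_{i+1})$ that is \emph{given} $\sigma_\kappa(t_1),\dots,\sigma_\kappa(t_i)$. On watermarked text such a predictor succeeds with probability about $\rho+(1-\rho)/S$. That does not contradict (ii), which concerns a predictor holding no state information. So the paper's route only appears to buy the stronger statement, while yours buys a true one.

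\textbf{Two repairs.} First, state your (iii) additively:
\[
\mathrm{Adv}(\mathcal{A})\le \mathrm{Adv}_{\mathrm{tok}}+\Pr[\mathsf{Q}],
\]
where $\mathrm{Adv}_{\mathrm{tok}}$ is the best advantage of a randomised function of the token ids alone. Do not phrase it as ``advantage at most $\Pr[\mathsf{Q}]$'' with token-id tests excluded. Off $\mathsf{Q}$, every test the adversary can run is such a function, so excluding them leaves nothing to bound.

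Second, the justification in Step~2 does not cover the tokens (ii) is about. You say generation reads $F$ only on emitted tokens and mask candidates. Under the paper's precomputed $O(V)$ masks, that excludes no token at all. Even under lazy evaluation, the masked argmax reveals that tokens ranked above the emitted one lie outside the target class. Moreover, under the natural reading of (ii), where ``fresh'' means not queried by the adversary, emitted and scanned tokens also count as fresh, and your independence claim is unproved for them.

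For clockwork, or any circulant $T$ including the soft cycle, the clean argument is shift invariance. Replacing $F$ by $F+c \bmod S$ leaves every mask unchanged, and hence leaves $\mathbf{t}^w$ unchanged. Therefore $F(t^\star)$ given $\mathbf{t}^w$ is exactly uniform for every token, emitted or not. This gives (ii) with success exactly $1/S$ in the hybrid.
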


\begin{proof}
  Without~$\kappa$, the adversary's view of $\mathcal{H}(\kappa \,\Vert\, t)$
  for any $t$ is uniform on the oracle output space.  Recovery
  probability $q \cdot 2^{-\lambda}$ follows from a standard
  guessing argument over a min-entropy-$\lambda$ key.  State
  prediction with advantage greater than $\mathrm{negl}(\lambda)$
  would imply distinguishing the random-oracle output from uniform,
  contradicting (i).

  For (iii): a $\sigma_\kappa$-measurable statistic
  $\mathcal{D}(\sigma_\kappa(\mathbf{t}))$ that distinguishes
  $\mathbf{t}^w$ from $\mathbf{t}^{LM}$ with advantage $\varepsilon$
  must, by the random-oracle property, distinguish the watermarked
  state sequence (which walks the chain at gated positions) from
  a state sequence drawn uniformly on $[S]^n$ (the LM's
  $\sigma_\kappa$-image is uniform iid by (ii)).  Such a
  $\mathcal{D}$ yields a state-predictor with advantage at least
  $\varepsilon$ on some fresh token, contradicting (ii) by a hybrid
  argument; hence $\varepsilon \leq \mathrm{negl}(\lambda)$.
  \emph{Token-id-level statistics that exploit natural-language
  marginals (e.g., bigram frequencies) are explicitly outside this
  guarantee: they are not $\sigma_\kappa$-measurable and trivially
  distinguish any LM sample from i.i.d.\ uniform tokens.}
\end{proof}


\subsection{Optimal LM-Aware Detector}
\label{subsec:proof-hdd}

\begin{theorem}[Entropy-weighted detector is asymptotically locally most powerful]
  \label{thm:hdd}
  Suppose the verifier has access to the LM at each position, and
  let $g_i \in \{0,1\}$ denote the gate indicator at position $i$.
  Set $\pi_i = g_i + (1 - g_i)/S$, the marginal probability of
  $X_i = 1$ under the watermark alternative. Approximating the
  joint law $\{X_i\}$ by the product of its marginals (asymptotically
  valid under the random oracle on distinct tokens, with the
  product-form CLT covariance vanishing by \autoref{lem:variance}),
  the log-likelihood ratio under the product law is
  \begin{equation}
    \Lambda
      \;=\; \sum_{i=1}^{n-1}\!
        \left[ X_i \log(\pi_i S)
          + (1 - X_i) \log \tfrac{1 - \pi_i}{1 - 1/S}\right].
  \end{equation}
  Then the test $\{\Lambda > c_\alpha\}$, with $c_\alpha$ the
  size-$\alpha$ critical value of $\Lambda$ under the null, is
  asymptotically locally most powerful within the product hypothesis
  class, i.e.\ it attains the Neyman--Pearson power against the
  product-form approximation of the joint law, with the
  $1$-dependence of $\{X_i\}$ contributing only $o(1)$ corrections
  in the LAN regime.
\end{theorem}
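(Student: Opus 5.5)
The plan is to reduce the statement to the Neyman--Pearson lemma for the product-form alternative, and then to pass from the exact product law to the true $1$-dependent law with a LAN-type argument.

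\emph{Step 1: the product-form likelihood ratio.} Under the product approximation, the null gives $X_i \sim \mathrm{Bernoulli}(1/S)$ independently, by the marginal-mean part of \autoref{lem:variance} with $k=1$. The alternative gives $X_i \sim \mathrm{Bernoulli}(\pi_i)$ independently, with $\pi_i = g_i + (1-g_i)/S$; this is the per-position version of the watermark-mean computation in \autoref{thm:detection}. Since the verifier has LM access, $g_i$ and hence $\pi_i$ are known constants given the text. Writing the product of Bernoulli likelihood ratios and taking logarithms gives exactly the displayed $\Lambda$.

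One degenerate point needs care. When $g_i=1$ we have $\pi_i=1$, so $\log(1-\pi_i)=-\infty$. I would treat this by the convention $0\cdot\log 0 = 0$: an observed $X_i=0$ at a gated position sends $\Lambda\to-\infty$ and the test accepts. Equivalently, one can regularise with $\pi_i = 1-\epsilon$ and let $\epsilon\to 0$.

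\emph{Step 2: exact optimality for the product law.} Fix the product null $P_0$ and the product alternative $P_1$. By Neyman--Pearson, the test $\{\Lambda > c_\alpha\}$, randomised at ties because $\Lambda$ is discrete, is most powerful at size $\alpha$ against this simple alternative. It is therefore also most powerful within the product hypothesis class.

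To connect with ``locally most powerful'', I would embed the alternative in a one-parameter family $\pi_i(\theta) = 1/S + \theta\, g_i (1-1/S)$, so that $\theta=\rho$-scaled local alternatives are $\theta_n = h/\sqrt{n}$. I would then note that the score at $\theta=0$ is $\sum_i g_i (X_i - 1/S)\cdot S$, up to a constant factor. This is exactly the gated-position count, and its first-order expansion agrees with $\Lambda$. So for local alternatives the LLR test and the score test coincide asymptotically, which gives local most-powerfulness.

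\emph{Step 3: from the product law to the true law.} This is the main obstacle. The actual $\{X_i\}$ are $1$-dependent under both hypotheses, and only the pairwise covariances vanish, by \autoref{lem:variance}. My plan is a LAN argument for the true joint law:

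\begin{enumerate}[nosep, leftmargin=1.6em]
  \item Show that under $P_0$ the normalised score $n^{-1/2}\sum_i g_i(X_i - 1/S)$ is asymptotically normal with the \emph{product-form} variance. This follows from the CLT for $1$-dependent bounded sequences already invoked in \autoref{thm:detection}, together with the vanishing adjacent covariances of \autoref{lem:variance}, which make the long-run variance equal to the i.i.d.\ one.
  \item Show that the true log-likelihood ratio differs from $\Lambda$ by $o_p(1)$ under local alternatives. For this I would write the joint law via the chain of conditionals $P(X_{i+1}\mid X_i)$ and argue that the conditional deviations from the marginals are of order $O(1/|\mathcal V|)$ coincidence terms, which are absorbed into the random-oracle approximation in the same way as in \autoref{lem:variance}.
  \item Apply Le Cam's third lemma to transfer the product-model power envelope to the true model.
\end{enumerate}

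The delicate point is the second item. The conditional law of $X_{i+1}$ given $X_i$ is not exactly the marginal, because $X_i$ and $X_{i+1}$ share $\sigma_\kappa(t_{i+1})$. Zero covariance alone does not give independence, but for Bernoulli pairs it does. I would exploit this: two binary variables with zero covariance are independent. Adjacent pairs are therefore pairwise independent, and it remains to bound the higher-order (triple) interactions, which I expect to contribute only the claimed $o(1)$ corrections.
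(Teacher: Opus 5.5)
Your Steps 1--2 are exactly the paper's proof. The paper computes the same product-form Bernoulli log-likelihood ratio with the $0\log 0=0$ convention and applies Neyman--Pearson to the product law. It handles the $1$-dependence only by a one-sentence appeal to the $1$-dependent CLT and Le Cam.

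The gap is in what you add to justify ``locally'' and to reach the true law. Since $\pi_i\in\{1,1/S\}$, ungated positions contribute exactly $0$ to $\Lambda$, and gated ones contribute $\log S$ or $-\infty$. With $G=\{i:g_i=1\}$, this gives $\Lambda=|G|\log S$ if every gated transition is valid and $\Lambda=-\infty$ otherwise. This is the LLR at $\theta=1$ in your family, not at $\theta_n=h/\sqrt n$, so its first-order expansion says nothing about it, and the $\Lambda$-test and the score test do not coincide:
\begin{itemize}
\item for finite $c<|G|\log S$, $\{\Lambda>c\}=\{\sum_i g_iX_i=|G|\}$, which has null probability $S^{-|G|}$;
\item a size-$\alpha$ version (for $\alpha>S^{-|G|}$) must randomise on $\{\Lambda=-\infty\}$;
\item its power against $\theta_n=h/\sqrt n$ tends to $\alpha$ as $|G|\to\infty$.
\end{itemize}
So the $\Lambda$-test is most powerful against the fixed product alternative (trivially, with power $1$), but it is not locally most powerful. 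The locally most powerful statistic is your score, the gated count $\sum_i g_iX_i$.

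The same degeneracy defeats Step 3. Le Cam's third lemma requires contiguity, but the fixed alternative is asymptotically singular with respect to the null: the all-valid event has null probability $S^{-|G|}\to 0$ and alternative probability $1$. So no LAN expansion links $\Lambda$ to the true law. The paper's one-line appeal does not address this either.

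Here is what does hold, without asymptotics. Take $\pi(\theta)=1/S+\theta(1-1/S)$ at gated positions. The product LLR of $\theta\in(0,1]$ against $\theta=0$ is a nondecreasing function of $\sum_i g_iX_i$; for $\theta<1$ it is affine with slope $\log(\pi(\theta)S)-\log\frac{1-\pi(\theta)}{1-1/S}>0$. The critical value is fixed by the null law of the count. Hence the gated-count test is uniformly most powerful over $\theta>0$ in the product class, so it is locally most powerful and has power $1$ at $\theta=1$.

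The higher-order interactions you leave open also vanish identically under the null. With $s_1,\dots,s_n$ i.i.d.\ uniform (random oracle, distinct tokens), $\Pr[X_i=1\mid s_1,\dots,s_i]=\Pr[s_{i+1}=(s_i+1)\bmod S\mid s_i]=1/S$ is constant. So $X_i$ is independent of $(X_1,\dots,X_{i-1})$, and the null law is exactly the product law; equivalently, increments of an i.i.d.\ uniform sequence on $\mathbb{Z}_S$ are i.i.d.\ uniform. What makes this exact is conditioning on the full state history. Factorising through $P(X_{i+1}\mid X_i)$ instead presupposes a Markov structure you have not shown.

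A genuine approximation argument is needed only on the watermarked side. There, the masked argmax at a gated position reveals that higher-probability rejected tokens are not in the target state, so their states are no longer uniform if they reappear later.
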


\begin{proof}
  Under the null, $X_i \sim \mathrm{Bernoulli}(1/S)$ marginally with
  pairwise zero covariance (\autoref{lem:variance}).  Under the
  watermark alternative, $X_i \sim \mathrm{Bernoulli}(\pi_i)$.  The
  sequence $\{X_i\}$ is 1-dependent: $X_i$ and $X_{i+1}$ share
  $t_{i+1}$, but for $|i-j| \geq 2$, $X_i \perp X_j$ under the
  random-oracle model on distinct tokens.

  The product-form likelihood ratio is
  \begin{align*}
    \Lambda
      &= \sum_i \log \frac{\pi_i^{X_i}(1 - \pi_i)^{1 - X_i}}
        {(1/S)^{X_i}(1 - 1/S)^{1 - X_i}} \\
      &= \sum_i X_i \log(\pi_i S)
        + (1 - X_i) \log \tfrac{1 - \pi_i}{1 - 1/S},
  \end{align*}
  matching the theorem statement (with the convention $0 \log 0 = 0$
  for terms with $\pi_i = 1$, i.e., gated positions where $X_i = 1$
  is deterministic).  The Neyman--Pearson lemma applied to the
  product hypothesis identifies $\Lambda$ as uniformly most powerful
  for the product law.

  By the central limit theorem for 1-dependent sequences (and
  vanishing pairwise covariance from \autoref{lem:variance}), the
  product-form $\Lambda$ has the same Gaussian limit as the
  product-form joint LLR up to $o(1)$ corrections (LAN regime,
  Le Cam); we therefore claim asymptotic local most-powerful-ness
  among tests within the product hypothesis class, not strict
  Neyman--Pearson optimality against the true 1-dependent joint.
  Hence the test based on
  $\Lambda$ achieves the Neyman--Pearson power asymptotically.
\end{proof}


\subsection{Detector Pseudocode}
\label{subsec:detector-pseudo}

We restate the model-free detection algorithm for self-contained
reading; the body version is \autoref{alg:detect}.

\begin{algorithm}[h]
  \caption{ChainMark Watermark Detection (model-free)}
  \label{alg:detect-app}
  \begin{algorithmic}
    \STATE {\bfseries Input:} text $x$, key $\kappa$, states $S$,
      transition $T$, threshold $\tau$
    \STATE $\mathbf{t} \leftarrow \mathrm{Tokenize}(x)$;
      $c \leftarrow 0$;\ $n \leftarrow |\mathbf{t}|$
    \FOR{$i = 1$ {\bfseries to} $n-1$}
      \STATE $s_i \leftarrow \mathcal{H}(\kappa \,\Vert\, t_i) \bmod S$;\
        $s_{i+1} \leftarrow \mathcal{H}(\kappa \,\Vert\, t_{i+1}) \bmod S$
      \IF{$T[s_i, s_{i+1}] > 0$}
        \STATE $c \leftarrow c + 1$
      \ENDIF
    \ENDFOR
    \STATE $\phi \leftarrow c / (n-1)$;\
      $p_0 \leftarrow (\sum_{s,s'} T[s,s']) / S^2$
    \STATE $z \leftarrow (\phi - p_0)
      / \sqrt{p_0 (1 - p_0) / (n-1)}$
    \STATE \textbf{return} $(\mathbf{1}[\phi > \tau],\ \phi,\ z,\
      1 - \Phi(z))$
  \end{algorithmic}
\end{algorithm}

The runtime is $O(n)$ hash evaluations and a single pass over the
token stream; no LM access is required.  The transition $T$ is the
same matrix used at generation time, so the random baseline $p_0$
is computed directly from $T$ (clockwork: $p_0 = 1/S$;
soft-cycle: $p_0 = 2/S$; general $k$-regular: $p_0 = k/S$).

\subsection{Supplementary Results: Quality Cost and Self-Healing}
\label{subsec:proof-supplementary}

These two results were stated in earlier drafts; we retain the
formal statements and proofs in the appendix for completeness, since
they are referenced from elsewhere in the paper.

\begin{theorem}[Quality cost identity and Jensen lower bound]
  \label{thm:quality}
  At a gated position with current state $s$ and partition mass
  $Z_s = p(\mathcal{V}_{(s+1) \bmod S}) > 0$, the KL between the
  renormalised ChainMark distribution
  $P_{\mathrm{ChainMark}}(t) = p(t)/Z_s \cdot
    \mathbf{1}[t \in \mathcal{V}_{(s+1) \bmod S}]$
  and the LM distribution $p$ is
  \(D_{\mathrm{KL}}(P_{\mathrm{ChainMark}} \,\|\, p) = \log(1/Z_s).\)
  Under hash uniformity, $\mathbb{E}[Z_s] = 1/S$, and Jensen's
  inequality gives
  \begin{equation}
    \mathbb{E}[D_{\mathrm{KL}}] \;\geq\; \log S,
  \end{equation}
  with equality iff $Z_s$ is constant in the key.  Averaged over
  positions with gate rate $\rho$,
  $\mathbb{E}[D_{\mathrm{KL}}]_{\mathrm{tok}} \geq \rho \log S$.
\end{theorem}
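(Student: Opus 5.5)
The argument has three steps: compute the KL divergence exactly at a single gated position, average that identity over the random key using Jensen, and then average over positions using the gate rate $\rho$.

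\emph{Exact identity.} Fix a gated position with current state $s$, and write $A = \mathcal{V}_{(s+1)\bmod S}$, so $Z_s = p(A) > 0$. The renormalised distribution $P_{\mathrm{ChainMark}}$ is supported on $A$. On $A$ the density ratio is constant, $P_{\mathrm{ChainMark}}(t)/p(t) = 1/Z_s$. Substituting into the definition gives
\[
D_{\mathrm{KL}}(P_{\mathrm{ChainMark}}\,\|\,p) = \sum_{t\in A} \tfrac{p(t)}{Z_s}\log\tfrac{1}{Z_s} = \log(1/Z_s),
\]
because $\sum_{t\in A} p(t)/Z_s = 1$. Tokens outside $A$ contribute nothing under the convention $0\log 0 = 0$.

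\emph{Expectation over the key.} Condition on the LM distribution $p$ and on the current state $s$. Under the random-oracle model of \autoref{sec:method}, each token lands in $\mathcal{V}_{(s+1)\bmod S}$ independently with probability $1/S$. By linearity, $\mathbb{E}_\kappa[Z_s] = \sum_t p(t)\Pr[\sigma_\kappa(t) = s+1] = 1/S$. The map $x\mapsto\log(1/x)$ is convex, so Jensen gives $\mathbb{E}[\log(1/Z_s)] \geq \log(1/\mathbb{E}[Z_s]) = \log S$. Because $\log(1/x)$ is strictly convex, equality holds exactly when $Z_s$ is almost surely constant in the key.

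\emph{Averaging over positions.} Ungated positions sample from $p$ itself, so they contribute zero KL. Gated positions each contribute at least $\log S$ in expectation. With gate rate $\rho = \mathbb{E}_i[g_i]$, the per-token average is therefore $\mathbb{E}[D_{\mathrm{KL}}]_{\mathrm{tok}} \geq \rho\log S$.

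\emph{Where the care is needed.} Two points need attention.

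First, the gate and the current state both depend on the prefix, and the prefix depends on earlier hash values. To handle this, I would condition on the whole prefix, which fixes $p_i$, $g_i$ and $s_i$. I would then use the fact that the next partition's membership is fresh randomness for tokens not yet queried. Tokens already hashed in the prefix break exact independence; as in \autoref{lem:variance}, their effect is absorbed into the random-oracle approximation.

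Second, the event $Z_s > 0$ must be handled. Conditioning on it raises the conditional mean of $Z_s$ to at least $1/S$. In that case I would either state the bound for the unconditioned idealisation $\mathbb{E}[Z_s] = 1/S$, or note that the conditioning is negligible when the LM's support is large relative to $S$.

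A final caveat: the identity is for the renormalised sampler. The greedy argmax in \autoref{alg:embed} is a separate object, and I would not claim the identity covers it.
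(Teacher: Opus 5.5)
Your proposal is correct and follows the paper's proof essentially step for step: the same direct computation $D_{\mathrm{KL}} = \log(1/Z_s)$, the same use of $\mathbb{E}[Z_s] = 1/S$ under hash uniformity combined with Jensen (the paper uses concavity of $\log$, you use the equivalent convexity of $\log(1/x)$), and the same averaging over gated positions at rate $\rho$. Your extra caveats (conditioning on the prefix, the $Z_s>0$ event, greedy versus renormalised sampling) address points the paper passes over silently, and on the last one you can claim more: the greedy point mass on $t^\ast \in \mathcal{V}_{(s+1)\bmod S}$ has KL $\log(1/p(t^\ast)) \ge \log(1/Z_s)$, so the lower bound, though not the identity, carries over to \autoref{alg:embed}.
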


\begin{proof}
  Direct computation:
  \(D_{\mathrm{KL}}(P_{\mathrm{ChainMark}} \,\|\, p)
    = \sum_{t \in \mathcal{V}_{(s+1) \bmod S}} (p(t)/Z_s)
        \log(1/Z_s)
    = \log(1/Z_s).\)
  Under hash uniformity, $\Pr[t \in \mathcal{V}_{(s+1) \bmod S}] =
  1/S$ for every fixed $t$, so $\mathbb{E}[Z_s] = \sum_t p(t)/S =
  1/S$.  Concavity of $\log$ and Jensen give
  $\mathbb{E}[\log Z_s] \leq \log \mathbb{E}[Z_s] = -\log S$, hence
  $\mathbb{E}[\log(1/Z_s)] \geq \log S$.  The token-averaged bound
  follows by averaging over positions with gate rate $\rho$.
\end{proof}

\begin{theorem}[Self-healing against oracle-blind adversaries]
  \label{thm:healing}
  Let $\mathcal{T}_{\mathrm{q}} \subseteq \mathcal{V}$ be the set of tokens
  the adversary has queried with the correct key.  For any token
  $t' \notin \mathcal{T}_{\mathrm{q}}$ that the adversary introduces,
  $\sigma_\kappa(t')$ is uniform on $[S]$ by the random-oracle
  property, so each modified position contributes an indicator
  distributed as $\mathrm{Bernoulli}(1/S)$ to the validity count,
  independently of strategy.  Hence under an oracle-blind adversary
  (every introduced token unqueried), the expected mass contributed
  to $\phi$ by modified-pair positions equals
  \begin{equation}
    m(\delta)
      \;=\; \tfrac{1}{S}\, \delta(2 - \delta);
  \end{equation}
  this is a lower bound when ranging over query-aided strategies that
  may bias replacement tokens toward $\sigma_\kappa(\mathcal{T}_{\mathrm{q}})$.
\end{theorem}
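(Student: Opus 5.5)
The identity $m(\delta) = \tfrac{1}{S}\delta(2-\delta)$ is a direct expectation computation. It mirrors the four-case decomposition in the proof of \autoref{thm:robustness}, isolating the cases where at least one token of a transition pair was modified.

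\emph{Uniform state of each introduced token.} Fix any adversarial strategy that is oracle-blind. For every introduced token $t' \notin \mathcal{T}_{\mathrm{q}}$, the adversary's view is independent of $\mathcal{H}(\kappa\Vert t')$. By the random-oracle model (as used in \autoref{thm:security}, part (ii)), $\sigma_\kappa(t')$ is therefore uniform on $[S]$ and independent of everything the adversary chose, including which positions were modified and which tokens were placed there.

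\emph{Per-pair mean.} I would then condition on the modification indicators $(M_i, M_{i+1})$ for each pair $(t_i,t_{i+1})$, exactly as in \autoref{thm:robustness}.
\begin{itemize}[nosep, leftmargin=1.2em]
  \item Case $(0,1)$: here $\sigma_\kappa(t_{i+1})$ is fresh and uniform. Clockwork column-regularity then gives $\Pr[X_i=1 \mid \sigma_\kappa(t_i)] = 1/S$.
  \item Case $(1,0)$: row-regularity gives the same value $1/S$.
  \item Case $(1,1)$: both states are fresh, so again $X_i = 1$ with probability $1/S$.
\end{itemize}
So every modified-pair indicator is $\mathrm{Bernoulli}(1/S)$, whatever the strategy. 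Under i.i.d.\ rate-$\delta$ modification, the probability that a pair is modified is $1-(1-\delta)^2 = \delta(2-\delta)$. Linearity of expectation over the $n-1$ pairs, normalised by $n-1$ as in \eqref{eq:phi}, gives the mass $\tfrac1S\delta(2-\delta)$.

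\emph{Lower-bound clause.} This is where care is needed, and I expect it to be the main obstacle. A query-aided adversary can choose replacement tokens in $\mathcal{T}_{\mathrm{q}}$ whose states it knows. Aiming them at valid successors only raises $\Pr[X_i=1]$ above $1/S$, so the oracle-blind value is the baseline the contributed mass cannot fall below for strategies that bias toward valid states.

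Strictly, however, an adversary with knowledge could also pick invalid states and \emph{lower} the mass. I would therefore state the clause under the same qualification the theorem uses: replacements biased toward $\sigma_\kappa(\mathcal{T}_{\mathrm{q}})$ \emph{valid} transitions. Under that restriction, monotonicity of each conditional indicator probability in the bias gives the lower bound termwise.
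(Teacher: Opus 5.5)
Your argument is the paper's: its proof also counts the fraction $1-(1-\delta)^2=\delta(2-\delta)$ of pairs touching a modified token, and gives each such pair conditional mean $1/S$ by the case analysis of \autoref{thm:robustness}. Where you depart from it, on the lower-bound clause, you are right and the paper is not. Its proof only considers query-aided adversaries who use known states to hit \emph{valid} transitions (``raising $\phi$ above $m(\delta)$, which only helps detection''). A scrubbing adversary would use them the other way. For example, placing queried tokens with non-consecutive states at adjacent modified positions makes those pairs contribute $0$, which pushes the total below $m(\delta)$. So the clause fails for general query-aided strategies, and your restriction is the right salvage. To turn it into a proof rather than an appeal to monotonicity, define the admissible class as strategies under which every touched pair has conditional validity probability at least $1/S$; the bound then follows by linearity. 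This per-pair formulation is needed because one replacement sits in two pairs, so ``biased toward valid'' is not a per-token notion.

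One step you share with the paper needs a stronger hypothesis than ``unqueried''. The case analysis needs the two states in a pair to be \emph{jointly} independent, not merely each uniform, and an oracle-blind adversary can break this by reusing tokens. Writing the same unqueried token $x$ at every modified position makes each $(1,1)$ pair a self-transition, which is never valid under clockwork for $S\ge 2$. The modified-pair mass is then $\tfrac{1}{S}\,2\delta(1-\delta)<m(\delta)$. Similarly, copying a surviving neighbour into a modified slot turns the adjacent $(0,1)$ or $(1,0)$ pair into a self-transition. The equality holds once introduced tokens are required to be distinct from one another and from the surviving tokens, i.e.\ the fresh-token model of \autoref{thm:robustness} (cf.\ the coincidence caveat in \autoref{lem:variance}). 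This is needed together with the i.i.d.\ choice of positions you already state. A minor point: you, like the paper, attach column-regularity to case $(0,1)$ and row-regularity to case $(1,0)$. It is the other way round, which is harmless here since clockwork has both.
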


\begin{proof}
  The fraction of token pairs $(t_i, t_{i+1})$ touching at least one
  modified token is $1 - (1-\delta)^2 = \delta(2 - \delta)$.  Each
  such pair contributes an indicator with conditional expectation
  $1/S$ by the same column- and row-regularity argument as in
  \autoref{thm:robustness}.  Strategies in which the adversary
  introduces previously-queried tokens (whose state is in
  $\sigma_\kappa(\mathcal{T}_{\mathrm{q}})$) can craft pairs that hit valid
  transitions deterministically, raising $\phi$ above
  $m(\delta)$, which only helps detection.  Hence the expected
  modified-pair contribution to $\phi$ satisfies
  $\geq m(\delta) = \delta(2-\delta)/S$ for every oracle-blind
  strategy, and this floor is therefore strategy-free.
\end{proof}


%% file: 12_reference_tables.tex
\section{Reference Tables}
\label{app:results}

\subsection{Closed-Form Calibration Lookup}

\begin{table}[h]
  \centering
  \caption{Calibration lookup $S^\star(n, \rho, \alpha)$ from
    \autoref{thm:detection}.  Rows are text lengths; columns are
    $(\alpha, \rho)$ pairs.  Entries are derived analytically from the
    closed-form detection bound and do not require empirical anchoring.}
  \label{tab:cal}
  \begin{tabular}{lcccc}
    \toprule
    & \multicolumn{2}{c}{$\alpha = 10^{-3}$}
    & \multicolumn{2}{c}{$\alpha = 10^{-6}$} \\
    \cmidrule(lr){2-3} \cmidrule(lr){4-5}
    $n$ & $\rho{=}0.3$ & $\rho{=}0.5$ & $\rho{=}0.3$ & $\rho{=}0.5$ \\
    \midrule
    100  & 6 & 3 & 12 & 5 \\
    200  & 4 & 2 & 7  & 3 \\
    500  & 2 & 2 & 4  & 2 \\
    1000 & 2 & 2 & 3  & 2 \\
    \bottomrule
  \end{tabular}
\end{table}

%% file: 13_reproducibility.tex
\section{Experimental Protocol and Reproducibility}
\label{app:repro}

This appendix records every setting needed to reproduce the
head-to-head matched-budget study of \autoref{sec:experiments},
the per-cell tables in the main results, and every figure.

\subsection{Models, Tokenizers, and Hardware}

\paragraph{Model fleet.} The headline experiments use three
instruction-tuned 7--8B parameter open-weight LLMs, each loaded via
the Hugging Face \texttt{transformers} library. The exact
\texttt{repo\_id}, revision pin, and \texttt{torch\_dtype} for every
model are listed in the public code release alongside its prompt
template; we do not enumerate them here in order to keep the
discussion model-agnostic. Each model is run on a single NVIDIA GPU
(H100 or H200, depending on memory pressure); CPU fallback is
supported for the detector but not for generation. Every model uses
its own native tokenizer, both for generation and for detection,
and self-perplexity (PPL) is computed on the same model's logits
that produced the text.

\paragraph{Decoding.} Watermarked positions use greedy argmax over
the allowed-partition mask of \autoref{alg:embed}; ungated positions
use temperature sampling at $T = 0.7$. The same temperature is used
to draw pilot generations for gate-threshold calibration.

\paragraph{Gated models.} A subset of the fleet is gated on the
Hugging Face Hub. The released environment expects an
\texttt{HF\_TOKEN} in scope at runtime; the token is read by
\texttt{huggingface\_hub} and never logged to disk. We do not name
specific gated repositories in this appendix; the public code drop
records each \texttt{repo\_id} alongside its license terms.

\subsection{Domains and Prompts}

\paragraph{Four domains, $N = 200$ prompts each.} Each domain runs
on $n = 200$ prompts drawn from a fixed pool, identical across
gates, models, and attacks within a cell.
\begin{itemize}[leftmargin=1.2em, itemsep=1pt, topsep=2pt]
  \item \textbf{Code:} HumanEval Python signatures with their
    natural-language docstrings as the prompt; the generation is
    the function body.
  \item \textbf{Factual:} short closed-answer prompts asking for
    a single attested fact (capitals, dates, named entities).
  \item \textbf{Wiki:} open-ended descriptive prompts derived from
    a curated Wikipedia concept list, expanded via a fixed
    template.
  \item \textbf{Writing:} open-ended creative-writing prompts
    requesting a short narrative or argumentative passage.
\end{itemize}
Every prompt set, with template strings, prompt indices, and a
deterministic shuffle seed, is shipped under
\texttt{data/v7\_min/<domain>/records.jsonl}.

\subsection{ChainMark Hyperparameters}

\paragraph{Default cell.} $S = 5$ states, clockwork transition
$T(s, s') = \mathbb{1}[s' \equiv s{+}1 \pmod{S}]$, watermark budget
$\rho = 0.5$, secret key fixed across all cells, runs, and models.

\paragraph{State sweep.} \autoref{subsec:exp5-calibration} sweeps
$S \in \{2, 3, 5\}$ at fixed $\rho = 0.5$ to anchor the
closed-form calibration of \autoref{thm:detection} on data.

\paragraph{$k$-regular topology.} \autoref{subsec:exp4-kregular}
runs the soft-cycle ($k=2$) topology at the default cell and
compares the empirical robustness threshold against $\delta^\star$
to validate \autoref{thm:k-regular}.

\subsection{Baseline Calibration}

The two baselines, KGW~\citep{kirchenbauer2023watermark} and
SWEET~\citep{lee2023sweet}, are matched to ChainMark's budget at the
cell level. SWEET is an entropy gate that activates the watermark
only at \emph{high}-entropy positions (the structural opposite of
schemes that gate at low-entropy positions); we configure it so
that the realised gate rate over the pilot pool is within
$\pm 0.02$ of $\rho = 0.5$, by quantile-matching $\tau_H$ to the
$(1-\rho)$ quantile of the per-position entropy distribution. KGW
runs at the same effective budget by construction, since it gates
every position. All three schemes share the same secret key and
the same generation pool.

\subsection{Attack Protocol}

Each watermarked generation is subjected to three independent
attack streams; post-attack detection statistics are reported per
cell in \autoref{sec:experiments}.
\begin{itemize}[leftmargin=1.2em, itemsep=1pt, topsep=2pt]
  \item \textbf{Random substitution} ($\delta = 0.20$):
    $\lceil 0.20\, n \rceil$ token positions chosen uniformly at
    random and replaced with uniform draws from the model's
    tokenizer vocabulary.
  \item \textbf{Translation round-trip:} EN $\to L \to$ EN via
    the NLLB-200 distilled model~\citep{nllbteam2022nllb}, where
    $L \in \{\text{French}, \text{German}, \text{Russian}, \text{Chinese}\}$.
    Each language constitutes a separate cell.
\end{itemize}
We do not include grammar-preserving paraphrase attacks in this
release; their evaluation is deferred to the extended version.

\subsection{Randomisation}

\texttt{random}, \texttt{numpy.random}, and
\texttt{torch.manual\_seed} are all seeded to 42 at the start of
each generation run. The random-substitution attack uses an
independent seed (43). Bootstrap resamples in
\autoref{sec:experiments} use a per-contrast key derived
deterministically from the cell descriptor.

\subsection{Released Artefacts}
\label{app:repro:paths}

For each generation we record the prompt, the watermarked output,
the per-position gate signal, the realised gate rate $\bar\rho$,
PPL, the detector statistic $\phi$ and its $z$-score, and every
post-attack $\phi$, $z$, and detection flag. All tables and figures
in this paper derive from these records. The code drop ships under
the same repository as this paper; the per-domain JSONL records
under \texttt{data/v7\_min/<domain>/records.jsonl} are the
source-of-truth for every empirical claim.

%% file: 14_run_data.tex
\section{Additional Run Data and Open Extensions}
\label{app:run-data}


\subsection{Translation Per-Pivot Breakdown}
\label{app:translation-pivots}

\begin{table}[h]
  \centering
  \caption{\textbf{Per-pivot ChainMark translation robustness.} TPR @
    $z>2.326$ for each NLLB-200 pivot under ChainMark $S=5$, $\rho=0.5$.
    All three models now cover four domains (factual, wiki, writing,
    code) at $25$ prompts each ($m=100$ generations per row, $300$
    per pivot column). ZH is the most aggressive pivot
    ($\delta_{\mathrm{eff}}\approx 0.81$); FR, DE, RU are gentler
    ($\delta_{\mathrm{eff}}\approx 0.69$).}
  \label{tab:translation-pivots}
  \small
  \begin{tabular}{lrrrrrr}
    \toprule
    Model & $m$ & clean & FR & DE & RU & ZH \\
    \midrule
    Llama-3.1-8B & $100$ & 100\% & 85\% & 78\% & 81\% & 75\% \\
    Mistral-7B   & $100$ & 100\% & 86\% & 92\% & 92\% & 77\% \\
    Qwen-2.5-7B  & $100$ & 100\% & 78\% & 75\% & 78\% & 66\% \\
    \bottomrule
  \end{tabular}
\end{table}

\subsection{FPR Recalibration: SD Recipe and Failure Modes}
\label{app:fpr-recipes}

\begin{table}[h]
  \centering
  \caption{\textbf{Empirical-SD recalibration vs.\ failure modes,
    evaluated on $n=3000$ non-watermarked samples ($1000$ per
    model).} Goal: bring empirical FPR close to the $\alpha=1\%$
    target while preserving TPR$=100\%$. Only the empirical-SD
    recipe (Fix~1) brings FPR within $\sim 0.2$ pp of the target
    ($1.17\%$); the other four sit at $2$--$5\%$ FPR or collapse
    TPR. The iid-baseline row is the closed-form $z_\alpha=2.326$
    threshold without recalibration. The $2.07\%$ baseline on this
    $n=3000$ pooled corpus supersedes the $1.7$--$2.0\%$ per-$S$
    cells in \autoref{tab:cal-sstar} (which use the smaller
    Exp.~5 sub-design with $n\approx 300$ per cell).}
  \label{tab:fpr-recipes}
  \small
  \setlength{\tabcolsep}{4pt}
  \begin{tabular}{lll}
    \toprule
    Method & FPR & TPR \\
    \midrule
    iid baseline (no fix; $z>2.326$)     & 2.07\% & 100.0\% \\
    \textbf{Fix~1: empirical SD recal.}  & \textbf{1.17\%} & \textbf{100.0\%} \\
    \midrule
    \multicolumn{3}{l}{\emph{Failure modes (reported, not adopted):}} \\
    Fix~2: Newey--West HAC                & 3.45\% & 100.0\% \\
    Fix~3: $k$-skip ($k=3$)               & 2.46\% & \phantom{0}\phantom{0}2.2\% \\
    Fix~4: stopword filter (top-$200$)    & 5.18\% & \phantom{0}99.0\% \\
    Fix~5: HDD-lite (inv.-freq.\ weighted) & 2.35\% & 100.0\% \\
    \bottomrule
  \end{tabular}
\end{table}

\subsection{Apples-to-Apples Head-to-Head at Empirical FPR$=1\%$}
\label{app:headline-recalibrated}

\begin{table}[h]
  \centering
  \caption{\textbf{Empirically-calibrated head-to-head at
    FPR$=1\%$.} Each method's threshold is the maximum across the
    per-model $99\%$ $z$-quantiles on a non-watermarked calibration
    corpus (the conservative recipe; per-model nulls in ChainMark:
    Llama $2.87$, Qwen $2.87$, Mistral $2.41$, so ChainMark
    $z^\star=2.87$; KGW per-model: Llama $2.77$, Mistral $2.34$,
    Qwen $2.62$, so KGW $z^\star=2.77$). ChainMark retains its TPR
    advantage at matched empirical FPR, confirming the
    analytical-threshold comparison in
    \autoref{tab:headline-aggregate} is not an FPR artefact. SWEET
    is omitted from this table: the SWEET null $z$-distribution was
    not collected on the FPR corpus and cannot be reproduced from
    the released artefacts; a SWEET-recalibrated row is deferred to
    the extended version.}
  \label{tab:headline-recalibrated}
  \small
  \begin{tabular}{lccc}
    \toprule
    Method & TPR$_{\mathrm{clean}}$ & TPR$_{\mathrm{random}}$ &
            TPR$_{\mathrm{ZH}}$ \\
    \midrule
    \textbf{ChainMark} (ours, $S{=}5$) & \textbf{100.0\%} & \textbf{\phantom{0}99.9\%} & \textbf{68.3\%} \\
    KGW $\gamma=0.5$         & \phantom{0}66.8\% & \phantom{0}42.1\% & 13.0\% \\
    \bottomrule
  \end{tabular}
\end{table}

\subsection{Empirical FPR$=0\%$ Achievability}
\label{app:fpr-zero}

\begin{table}[h]
  \centering
  \caption{\textbf{FPR$=0\%$ by lifting the threshold to
    $z^\star = \max_i z_{\mathrm{nwm},i} + 0.5$.} Computed on the
    held-out non-watermarked corpus ($\sim 1000$ samples per model).
    Watermarked text retains TPR$=100\%$ because clean ChainMark $z$
    at the canonical $S=5$ ($\approx 25$ across the three models;
    \autoref{tab:cal-sstar}) is several times the maximum observed
    null $z$. KGW and SWEET cannot achieve this regime because their
    watermarked and non-watermarked $z$-distributions overlap.}
  \label{tab:fpr-zero}
  \small
  \begin{tabular}{lcrr}
    \toprule
    Model & $\max z_{\mathrm{nwm}}$ & lifted $z^\star$ & TPR @ $z^\star$ \\
    \midrule
    Llama-3.1-8B & $7.83$ & $8.33$ & $100\%$ \\
    Qwen-2.5-7B  & $5.00$ & $5.50$ & $100\%$ \\
    Mistral-7B   & $4.82$ & $5.32$ & $100\%$ \\
    \bottomrule
  \end{tabular}
\end{table}

\paragraph{Open data and deferred extensions.}
The following experiments are deferred to the extended version of
this paper:
\begin{itemize}
  \item \textbf{DIPPER paraphrase attack.} Grammar-preserving
    paraphrase via DIPPER~\citep{krishna2023paraphrasing} on at least
    one (model, domain) cell.
  \item \textbf{$k$-regular validation at $k\geq 3$.}
    \autoref{thm:k-regular} predicts $p_0 = k/S$ for any
    $k$-regular topology; we validate $k\in\{1,2\}$ only.
  \item \textbf{External-LM PPL or MAUVE quality metric.}
    Self-PPL is conservative but circular; an external judge LM or
    MAUVE would give a model-independent quality reading.
  \item \textbf{Base (non-instruction-tuned) and $\geq 10$B
    LLMs.} Our fleet covers only instruction-tuned 7--8B models.
  \item \textbf{$\rho$-sweep on the new fleet.} The current
    headline runs use $\rho=0.5$ only; the previous gate-invariance
    ablation was on a smaller setup.
  \item \textbf{Adversary with detector-oracle access.}
    \autoref{thm:security} excludes adversaries who can query the
    detector; characterising query complexity is open.
\end{itemize}

%% file: 15_reference_impl.tex
\section{Reference Detector Implementation}
\label{app:refimpl}

The following 24-line Python listing is a self-contained,
dependency-minimal reference implementation of
\autoref{alg:detect} for clockwork ChainMark. It takes a list of integer
token IDs, a bytes key, a state count $S$, and a significance
threshold $\alpha$; it returns the fingerprint score $\phi$, the
$z$-score, the one-sided $p$-value, and a binary detection flag.

The listing is sufficient to independently verify any ChainMark-watermarked
text given only the secret key, with no language-model access and no
learned components. Extensions to arbitrary $k$-regular topologies
(cf.~\autoref{prop:kregular}) require replacing the successor check
\texttt{states[i+1] == (states[i]+1) \% S} with a lookup
\texttt{T[states[i]][states[i+1]] > 0} and setting \texttt{p0 =
(T > 0).mean()}; no other modification is needed.

\begin{figure*}[t]
\begin{verbatim}
import hashlib
from math import sqrt
from scipy.stats import norm

def sha_state(key: bytes, token_id: int, S: int) -> int:
    """SHA-256-based state assignment sigma_kappa(t) = H(kappa || t) mod S."""
    h = hashlib.sha256(key + token_id.to_bytes(8, "big")).digest()
    return int.from_bytes(h[:8], "big") % S

def detect_chainmark(token_ids, key: bytes, S: int,
               alpha: float = 0.01) -> dict:
    """Clockwork ChainMark detector; O(n) in the number of tokens."""
    n = len(token_ids)
    if n < 2:
        return {"phi": 0.0, "z": 0.0, "p_value": 1.0,
                "is_watermarked": False}
    states = [sha_state(key, t, S) for t in token_ids]
    valid = sum(1 for i in range(n - 1)
                if states[i + 1] == (states[i] + 1) % S)
    phi = valid / (n - 1)
    p0 = 1.0 / S
    se = sqrt(p0 * (1.0 - p0) / (n - 1))
    z = (phi - p0) / se
    p_value = 1.0 - norm.cdf(z)
    return {"phi": phi, "z": z, "p_value": p_value,
            "is_watermarked": p_value < alpha}
\end{verbatim}
\caption{Reference detector for clockwork ChainMark: 24 lines of
  dependency-minimal Python (\texttt{hashlib} + \texttt{scipy.stats}).
  Inputs are the tokenizer's token IDs, the secret key, and the state
  count; outputs are the fingerprint $\phi$, $z$-score, one-sided
  $p$-value, and a binary detection flag.}
\label{fig:refimpl}
\end{figure*}

\paragraph{Complexity.} Two SHA-256 evaluations per token
($2n \cdot 256$ bits of hash output) and $n-1$ integer equality
checks. At $n = 100$ the entire detection pipeline runs in under
$1$~ms on a single CPU core; no GPU, no model weights, no tokenizer
aside from what is needed to obtain \texttt{token\_ids}.

\paragraph{Interpretation as an audit primitive.} The detector's
inputs are exactly what a third-party auditor could receive under an
Article 50 disclosure regime: a public text, a detector program, and
a (confidentially held) key. The output is a $p$-value with
analytically known false-positive behaviour under the null,
sidestepping the calibration-by-grid-search problem that afflicts
logit-bias watermark families.

\paragraph{Key management.} The scheme's security reduces to the
confidentiality of \texttt{key} plus the random-oracle idealisation
of SHA-256 (\autoref{thm:security}). In practice an auditor and a
deployer can share the key through any standard key-management
substrate (e.g., HKDF-derived per-deployment keys, committed to an
external ledger so that the commitment precedes generation). Rotating
keys per deployment epoch preserves the $p_0 = k/S$ null baseline
without changing any detector behaviour.

%% file: 16_generation_examples.tex
\section{Prompt and Generation Examples}
\label{app:examples}

This appendix shows representative prompts from each of the four
content domains used in \autoref{sec:experiments}, together with the
exact ChainMark configuration and post-attack pipeline that produced
the headline numbers in \autoref{tab:headline-aggregate} and
\autoref{fig:headline-percell}. All three models receive the same
prompt verbatim with no system prompt prepended; decoding parameters
($T{=}0.7$, top-$p{=}1$, $n{=}200$ tokens, deterministic
prompt-order seed) are also held fixed across models and methods.

\paragraph{Prompt examples per domain.}
\begin{itemize}[nosep, leftmargin=*]
  \item \textbf{Code} (HumanEval problem stems, $164$ problems
    available; we use the first $100$):
    \begin{quote}\small\ttfamily
    from typing import List\\
    def has\_close\_elements(numbers: List[float], threshold: float) -> bool:\\
    \quad """Check if in given list of numbers, are any two numbers closer to each other than given threshold."""
    \end{quote}
  \item \textbf{Factual} (short closed-form knowledge prompts):
    \begin{quote}\small\ttfamily
    The capital of France is
    \end{quote}
  \item \textbf{Wiki} (open-ended ``Explain $X$\ldots'' prompts over
    a curated $176$-entry concept list):
    \begin{quote}\small\ttfamily
    Explain Donald Trump in a comprehensive way.
    \end{quote}
  \item \textbf{Writing} (creative-completion prompts):
    \begin{quote}\small\ttfamily
    Write a short story that begins: The colony ship arrived three centuries late, and someone was already waiting.
    \end{quote}
\end{itemize}

\paragraph{ChainMark generation configuration (used in
\autoref{tab:headline-aggregate}).}
At each generation step on the prompt above, the model produces a
distribution over its native vocabulary; ChainMark then masks logits
at gated positions (gate density $\rho{=}0.5$, high-entropy gate
$G_{H_{\mathrm{high}}}$) to keep only token IDs whose state
$\sigma_\kappa(t) = \mathrm{SHA\text{-}256}(\kappa\,\Vert\,t) \bmod
S$ equals $(s+1) \bmod S$, where $s$ is the previous token's state.
At ungated positions the model samples at $T{=}0.7$ from the original
distribution. The secret key $\kappa$ is fixed for the full
campaign; in deployment a regulator-side audit re-derives the same
state sequence from the same key with the model's tokenizer.

\paragraph{Attack-pipeline example.} Given the watermarked text $x$
(say, $200$ tokens of a wiki-domain answer), the three attack
conditions in \autoref{tab:headline-aggregate} apply respectively:
\begin{itemize}[nosep, leftmargin=*]
  \item \textbf{Clean.} Detect on $x$ as-is.
  \item \textbf{Random substitution at $\delta_{\mathrm{eff}}{=}0.20$.}
    Replace a uniformly random $20\%$ of tokens with a uniformly
    sampled vocabulary token, then detect on the corrupted text.
  \item \textbf{ZH back-translation.} Pass $x$ through
    NLLB-200~\citep{nllbteam2022nllb}
    English\,$\rightarrow$\,Chinese\,$\rightarrow$\,English; record the
    empirical token-edit-distance $\delta_{\mathrm{eff}}$
    ($\approx 0.81$ for ZH on average,
    \autoref{tab:translation-pivots}); then detect on the
    round-tripped text.
\end{itemize}
The detector runs the same $\sigma_\kappa$ and counts valid
transitions; reports $\phi$, the standardised $z$, and a binary
flag at $z>z_\alpha$ (\autoref{alg:detect-app},
\autoref{fig:refimpl}).

\paragraph{Token-level walkthrough.}
For the wiki prompt \texttt{Explain Donald Trump in a comprehensive
way.} on Llama-3.1-8B-Instruct at $S{=}5$, $\rho{=}0.5$, an
illustrative excerpt of the first $20$ generated tokens looks
roughly as below. We show the (token, state) pair at each position
and mark gated steps with $\bullet$ (the gate raised the mask;
ChainMark forced $s_{i+1} = (s_i{+}1)\bmod 5$) versus ungated steps
with $\circ$ (no mask; the model sampled freely):

\begin{quote}\small\ttfamily
\textbf{Donald}$_2$ $\circ$ \textbf{Trump}$_3$ $\circ$
(\textbf{born}$_4$ $\bullet$ \textbf{June}$_0$ $\bullet$
\textbf{14}$_1$ $\bullet$, \textbf{1946}$_2$ $\bullet$)
\textbf{is}$_3$ $\bullet$ \textbf{a}$_4$ $\circ$ \textbf{former}$_4$
$\circ$ \textbf{U.S.}$_0$ $\bullet$ \textbf{president}$_1$
$\bullet$ \textbf{and}$_2$ $\bullet$ \textbf{businessman}$_3$
$\bullet$ \dots
\end{quote}

At gated positions the next token's state is forced to be
$(\text{prev}+1)\bmod 5$ (here $S{=}5$, so the cycle is
$0{\to}1{\to}2{\to}3{\to}4{\to}0$); the model picks the highest-prob
\emph{token} whose state matches that requirement. At ungated
positions any token may follow. The gate threshold $\tau$ is
calibrated on a pilot so the realised gate rate tracks $\rho{=}0.5$.

\paragraph{Detection on this excerpt.}
A regulator running the detector
(\autoref{fig:refimpl}) on the same string with the same key $\kappa$
re-derives the state sequence
$2,3,4,0,1,2,3,4,4,4,0,1,2,3,\dots$ and
observes $11/14$ valid transitions, i.e.\ $\phi=0.79$. With
$p_0{=}0.20$ and $n{=}14$, $z = (0.79-0.20)/\sqrt{0.16/13} \approx
5.3$, well above $z_{0.01}=2.326$, so the detector flags
``watermarked''. A non-watermarked sample of the same length under
the null distribution averages $\phi\approx 0.20$ and $z\approx 0$.